\newcommand{\SWTVC}{sliding $\Delta$-window temporal vertex cover}
\newcommand{\SWTVCD}[1]{sliding #1-window temporal vertex cover}
\newcommand{\SWTVCProblem}{\textsc{Sliding Window Temporal Vertex Cover}}
\newcommand{\SWTVCProblemShort}{\textsc{SW-TVC}}
\newcommand{\DTVCproblem}{\textsc{$\Delta$-TVC}}
\newcommand{\DTVCproblemD}[1]{\textsc{#1-TVC}}
\newtheorem{theorem}{Theorem}
\newtheorem{corollary}{Corollary}
\newtheorem{definition}{Definition}
\newtheorem{observation}{Observation}
\newtheorem{lemma}{Lemma}
\newtheorem{problem}{Problem}
\newtheorem*{ethypot}{Exponential Time Hypothesis}
\begin{document}
\title{\vspace{-0.5cm}Temporal Vertex Cover with a Sliding Time Window\thanks{This work was partially supported 
by the NeST initiative of the School of EEE and CS at the University of Liverpool and 
by the EPSRC Grants EP/P020372/1 and EP/P02002X/1. 
A preliminary conference version of this work appeared in the Proceedings of ICALP 2018~\cite{AMSZ18}.}}
\author{Eleni C.~Akrida\thanks{Department of Computer Science, University of Liverpool, Liverpool, UK. 
Email: \texttt{e.akrida@liverpool.ac.uk}} 
\and George B.~Mertzios\thanks{Department of Computer Science, Durham University, Durham, UK. 
Email: \texttt{george.mertzios@durham.ac.uk}} 
\and Paul G.~Spirakis\thanks{Department of Computer Science, University of Liverpool, Liverpool, UK. 
Email: \texttt{p.spirakis@liverpool.ac.uk}}
\and Viktor Zamaraev\thanks{Department of Computer Science, Durham University, Durham, UK. 
Email: \texttt{viktor.zamaraev@durham.ac.uk}}}
\date{\vspace{-1.0cm}}
\maketitle

\begin{abstract}		
Modern, inherently dynamic systems are usually characterized by a network structure, i.e.~an underlying graph topology, 
which is subject to discrete changes over time. 
Given a static underlying graph $G$, a temporal graph can be represented via an assignment of a set 
of integer time-labels to every edge of $G$, indicating the discrete time steps when this edge is active.
While most of the recent theoretical research on temporal graphs has focused on the notion of a temporal path 
and other ``path-related'' temporal notions, only few attempts have been made to investigate ``non-path'' temporal graph problems. 
In this paper, motivated by applications in sensor and in transportation networks, we introduce and study 
two natural temporal extensions of the classical problem \textsc{Vertex Cover}. 
In both cases we wish to minimize the total number of ``vertex appearances'' that are needed 
to ``cover'' the whole temporal graph. 
In our first problem, \textsc{Temporal Vertex Cover}, the aim is to cover every edge at least once 
during the lifetime of the temporal graph, where an edge can be covered by one of its endpoints, only at 
a time step when it is active. 
In our second, more pragmatic variation \textsc{Sliding Window Temporal Vertex Cover}, we are also given 
a natural number $\Delta$, and our aim is to cover every edge at \emph{least once} 
at \emph{every $\Delta$ consecutive} time steps. 
We present a thorough investigation of the computational complexity and approximability of these two 
temporal covering problems. 
In particular, we provide strong hardness results, complemented by various approximation and exact 
algorithms. Some of our algorithms are polynomial-time, while others are asymptotically almost optimal
under the Exponential Time Hypothesis (ETH) and other plausible complexity assumptions.\newline
 
\noindent\textbf{Keywords:} Temporal networks, temporal vertex cover, 
Exponential Time Hypothesis (ETH),
approximation algorithm, approximation hardness.
\end{abstract}

\section{Introduction and Motivation} \label{intro}

A great variety of both modern and traditional networks are inherently dynamic,
in the sense that their link availability varies over time. 
Information and communication networks, social networks, transportation networks, and several
physical systems are only a few examples of networks that change over time~\cite{Holme-Saramaki-book-13,michailCACM}. 
The common characteristic in all these application areas is that the network structure, 
i.e.~the underlying graph topology, is subject to \emph{discrete changes over time}.
In this paper we adopt a simple and natural model for time-varying networks which is given with 
time-labels on the edges of a graph, while the vertex set remains unchanged. 
This formalism originates in the foundational work of Kempe et al.~\cite{kempe}.

\begin{definition}[temporal graph]
\label{temp-graph-def} A \emph{temporal graph} is a pair $(G,\lambda)$,
where $G=(V,E)$ is an underlying (static) graph and $\lambda :E\rightarrow
2^{\mathbb{N}}$ is a \emph{time-labeling} function which assigns to every
edge of $G$ a set of discrete-time labels.
\end{definition}

For every edge $e\in E$ in the underlying graph $G$ of a temporal graph 
$(G,\lambda)$, $\lambda (e)$ denotes the set of time slots at which $e$ is \emph{active} in $(G,\lambda)$. 
Due to its vast applicability in many areas, this notion of temporal graphs 
has been studied from different perspectives under various names 
such as \emph{time-varying}~\cite{FlocchiniMS09,TangMML10-ACM,krizanc1}, 
\emph{evolving}~\cite{xuan,Ferreira-MANETS-04,clementi}, 
\emph{dynamic}~\cite{GiakkoupisSS14,CasteigtsFloccini12}, 
and \emph{graphs over time} \cite{Leskovec-Kleinberg-Faloutsos07}; 
for a recent attempt to integrate  existing models, concepts, and results 
from the distributed computing perspective see the 
survey papers~\cite{CasteigtsFloccini12,flocchini1,flocchini2} and the references therein. 
Data analytics on temporal networks have also been very recently studied
in the context of summarizing networks that represent sports teams' activity
data to discover recurring strategies and understand team
tactics~\cite{gionis1}, as well as extracting patterns from interactions between
groups of entities in a social network~\cite{gionis2}.

Motivated by the fact that information in temporal graphs can
``flow'' only along sequences of edges whose time-labels are increasing, 
most temporal graph parameters and optimization problems that have
been studied so far are based on the notion of temporal paths and other
``path-related'' notions, such as temporal analogues of distance, diameter,
reachability, exploration, and centrality~\cite{AkridaGMS16,erlebach,mertziosMCS19,michailTSP,AkridaGMS-TOCS17,enright2018deleting,AkridaMNRPZ_Stochastic-19,AkridaMS19}. 
In contrast, only few attempts have been made to define ``non-path'' temporal graph problems. 
Motivated by the contact patterns among high-school students, 
Viard et al.~\cite{viardClique, viardCliqueTCS}, and later Himmel et al.~\cite{neidermeier}, 
introduced and studied $\Delta$-cliques, an extension of the concept of cliques to temporal graphs, 
in which all vertices interact with each other at least once every $\Delta$ consecutive time steps 
within a given time interval. 
Furthermore, 
natural temporal extensions have been recently introduced and studied for the classical problems graph coloring~\cite{MMZ19} and maximum matching~\cite{baste2018temporal,MertziosMNZZ-Arxiv19}.

In this paper we introduce and study two natural temporal extensions of the problem \textsc{Vertex Cover} in static graphs, 
which take into account the dynamic nature of the network. 
In the first and simpler of these extensions, namely \textsc{Temporal Vertex Cover} (for short, \textsc{TVC}), 
every edge $e$ has to be ``covered'' at least once during the lifetime $T$ of the network (by one of its endpoints), 
and this must happen at a time step $t$ when $e$ is active. 
The goal is then to cover all edges with the minimum total number of such ``vertex appearances''. 
On the other hand, in many real-world applications where 
scalability is important, the lifetime~$T$ can be arbitrarily large but the network still needs to remain sufficiently covered. 
In such cases, as well as in safety-critical systems (e.g.~in military applications), 
it may not be satisfactory enough that an edge is covered just \emph{once} during the \emph{whole lifetime} of the network. 
Instead, every edge must be covered at least once within \emph{every small $\Delta$-window} of time in which it is active (for an appropriate value of $\Delta$), regardless of how large the lifetime is; 
this gives rise to our second optimization problem, namely \textsc{Sliding Window Temporal Vertex Cover} 
(for short, \textsc{SW-TVC}). 
Formal definitions of our problems \textsc{TVC} and 
\textsc{SW-TVC} are given in Section~\ref{preliminaries}. 
Here it is worth mentioning that very recently another temporal version of \textsc{Vertex Cover}~--namely 
the \emph{network-untangling} problem--~has been introduced and studied, motivated by applications 
in discovering events' timelines from complex interactions among entities~\cite{Rozenshtein17}.


Our main motivation for introducing and studying \textsc{TVC} and \textsc{SW-TVC} is of theoretical nature, 
namely to lift one of the most classical optimization problems, such as \textsc{Vertex Cover}, 
to the temporal setting. However, these temporal extensions of \textsc{Vertex Cover} 
could potentially also prove useful in extending the known practical applications 
of the static \textsc{Vertex Cover} problem. 
One example of such a possible application comes from the field of sensor networks, 
where several works considered problems of placing sensors to cover a whole area 
or multiple critical locations, e.g.~for reasons of surveillance. 
Such studies usually wish to minimize the number of sensors used or the total energy
required~\cite{kranakis1,kranakis2,kranakis3,zhu,nikoletseas2}.

\subsection{Our contribution}
\label{contributions}

In this paper we present a thorough investigation of the complexity and approximability of 
the problems \textsc{Temporal Vertex Cover} (\textsc{TVC}) and \textsc{Sliding Window Temporal Vertex Cover}
(\textsc{SW-TVC}) on temporal graphs.
We first prove in Section~\ref{sec:hardness-approx-TVC} that \textsc{TVC} remains NP-complete even 
on the special case of star temporal graphs, i.e.~when the underlying graph $G$ is a star. 
This NP-completeness is proved via a reduction from \textsc{Set Cover}, which provides a natural 
one-to-one correspondence between the input \textsc{Set Cover} instance and the produced 
instance of \textsc{TVC} on star temporal graphs. 
Furthermore we prove that, for any $\varepsilon<1$, \textsc{TVC} on star temporal graphs cannot be optimally solved in $O(2^{\varepsilon T})$ time, assuming the Strong Exponential Time Hypothesis (SETH), 
as well as that it does not admit a polynomial-time $(1-\varepsilon)\ln{n}$-approximation algorithm, 
unless NP has $n^{O(\log \log n)}$-time deterministic algorithms. 
On the positive side we prove that, on general temporal graphs with $n$ vertices, 
\textsc{TVC} can be $(H_{n-1}-\frac{1}{2})$-approximated in polynomial time, 
where $H_{n}=\sum_{i=1}^{n}\frac{1}{i} \approx \ln n$ is the $n$th harmonic number.

In Section~\ref{sec:tight-algorithms-SW-TVC} and in the reminder of the paper we deal with \textsc{SW-TVC}. 
We prove in Section~\ref{subsec:lower-bound-ETH} a strong complexity lower bound on arbitrary temporal graphs. 
More specifically we prove that, for \emph{any} (arbitrarily growing) 
functions $f:\mathbb{N}\rightarrow \mathbb{N}$ and $g:\mathbb{N}\rightarrow \mathbb{N}$, there 
exists a constant $\varepsilon \in (0,1)$ such that \textsc{SW-TVC} cannot be solved 
in $f(T)\cdot 2^{\varepsilon n \cdot g(\Delta)}$ time, assuming the Exponential Time Hypothesis (ETH). 
This ETH-based lower bound turns out to be asymptotically almost tight, as 
we present an exact dynamic programming algorithm with running time $O(T\Delta (n+m)\cdot 2^{n(\Delta+1)})$. 
This worst-case running time can be significantly improved in certain special temporal graph classes. 
In particular, when the ``snapshot'' of $(G,\lambda)$ at every time step has vertex cover number bounded by~$k$, 
the running time becomes $O(T\Delta (n+m)\cdot n^{k(\Delta+1)})$. 
That is, whenever $\Delta$ is constant, 
this algorithm is polynomial in the input size on temporal graphs with bounded vertex cover number 
at every time step. 
Notably, when every snapshot is a star (i.e.~a superclass of the star temporal graphs 
studied in Section~\ref{sec:hardness-approx-TVC}) the running time of the algorithm is $O(T\Delta (n+m)\cdot 2^{\Delta})$.

In Section~\ref{sec:hardDTVC} we prove strong inapproximability results for \textsc{SW-TVC}, 
even in the special case where the length of the sliding window is $\Delta=2$. 
In particular, we prove that, unless P=NP, this problem does not admit a \emph{Polynomial Time Approximation Scheme (PTAS)} 
even when $\Delta=2$, the maximum degree in the underlying graph $G$ is at most $3$, and every connected component of every snapshot has at most $7$ vertices. 
Finally, in Section~\ref{sec:approx} we provide a series of approximation algorithms for the general 
\textsc{SW-TVC} problem, with respect to various incomparable temporal graph parameters. 
In particular, we provide polynomial-time approximation algorithms with approximation ratios 
(i)~$\ln{n} + \ln{\Delta} + \frac{1}{2}$, 
(ii)~$2k$, where $k$ is the maximum number of times that each edge can appear in a sliding $\Delta$ time window (thus implying a ratio of $2\Delta$ in the general case), 
(iii)~$d$, where $d$ is the maximum vertex degree at every snapshot of $(G,\lambda)$. 
Note that, for $d=1$, the latter result implies that \textsc{SW-TVC} can be optimally solved in polynomial time whenever every snapshot of $(G,\lambda)$ is a matching.

\section{Preliminaries and notation}
\label{preliminaries}

A theorem proving that a problem is NP-hard does not provide much information 
about how efficiently (although not polynomially, unless P=NP) this problem can be solved. 
In order to prove some useful complexity lower bounds, we mostly need to rely on some 
complexity hypothesis that is stronger than``P~$\neq$~NP''. 
Impagliazzo, Paturi, and Zane formulated the \emph{Exponential Time Hypothesis (ETH)}~\cite{ETH01}, 
which is one of the established and most well-known complexity hypotheses.

\begin{ethypot}[ETH~\cite{ETH01}]
\label{ETH-def}
There exists an $\varepsilon<1$ such that \textsc{3SAT} cannot be solved in $O(2^{\varepsilon n})$ time, where $n$ is the number of variables in the input 3-CNF formula.
\end{ethypot}

In addition to formulating ETH, Impagliazzo and Paturi proved the celebrated \emph{Sparsification Lemma}~\cite{Sparsification01}, which has the following theorem as a consequence. This result is quite useful 
for providing lower bounds assuming ETH, as it expresses the running time in terms of the size of the input 
3-CNF formula, rather than only the number of its variables.

\begin{theorem}[\hspace{-0,001cm}\cite{Sparsification01}]
\label{ETH-thm}
\textsc{3SAT} can be solved in time $2^{o(n)}$ if and only if it can be solved in time~$2^{o(m)}$
on 3-CNF formulas with $n$ variables and $m$ clauses.
\end{theorem}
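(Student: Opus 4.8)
The plan is to deduce both implications from the \emph{Sparsification Lemma} of~\cite{Sparsification01}, which states that for every $\varepsilon>0$ there is a constant $C=C(\varepsilon)$ such that any 3-CNF formula $\phi$ on $n$ variables can be written, in time $2^{\varepsilon n}\cdot\mathrm{poly}(n)$, as a disjunction $\phi\equiv\bigvee_{i=1}^{t}\psi_i$ where $t\le 2^{\varepsilon n}$ and each $\psi_i$ is a 3-CNF formula on the same variables with at most $Cn$ clauses. Throughout I would use the standard reformulation that an algorithm runs in time $2^{o(n)}$ (resp.\ $2^{o(m)}$) if and only if for every $\varepsilon>0$ it can be made to run in time $2^{\varepsilon n}$ (resp.\ $2^{\varepsilon m}$); I would first record this equivalence carefully, since the entire argument is cleanest in the ``for all $\varepsilon$'' form.

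For the easy direction I would argue that, assuming \textsc{3SAT} is solvable in time $2^{o(n)}$, one may suppose without loss of generality that every variable occurs in at least one clause (otherwise delete the unused variables), so that counting variable--clause incidences gives $n\le 3m$. Hence any algorithm with running time $2^{f(n)}$ where $f(n)=o(n)$ also runs in time $2^{f(3m)}=2^{o(m)}$, which settles this implication; the only point is that the number of variables is at most a constant times the number of clauses.

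The hard part will be the converse: assuming \textsc{3SAT} is solvable in time $2^{o(m)}$, show it is solvable in time $2^{o(n)}$. The obstacle is that a priori $m$ can be as large as $\Theta(n^3)$, so a $2^{o(m)}$ bound by itself yields nothing better than $2^{o(n^3)}$; this is precisely where sparsification is needed, to first reduce to instances whose number of clauses is linear in $n$. Concretely, I would fix any target $\varepsilon>0$ and apply the Sparsification Lemma with parameter $\varepsilon/2$, producing in time $2^{(\varepsilon/2)n}\cdot\mathrm{poly}(n)$ the formulas $\psi_1,\dots,\psi_t$ with $t\le 2^{(\varepsilon/2)n}$ and each $\psi_i$ having at most $Cn$ clauses. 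Since $\phi$ is satisfiable iff some $\psi_i$ is, I would run the assumed $2^{o(m)}$ algorithm on each $\psi_i$: choosing the running time $2^{\delta m}$ with $\delta=\varepsilon/(2C)$, each call costs $2^{\delta\cdot Cn}=2^{(\varepsilon/2)n}$, so over all $t$ formulas the total is $2^{(\varepsilon/2)n}\cdot 2^{(\varepsilon/2)n}\cdot\mathrm{poly}(n)=2^{\varepsilon n+o(n)}$. As $\varepsilon>0$ was arbitrary, this gives a $2^{o(n)}$ algorithm.

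The only delicate points I would watch are the bookkeeping in passing between the $o(\cdot)$ and ``for all $\varepsilon$'' formulations, and making sure the constant $C(\varepsilon)$ from sparsification is absorbed correctly when selecting $\delta$ and the subformula solver; both are routine once the quantifiers are fixed, and no genuinely new idea beyond the Sparsification Lemma is required.
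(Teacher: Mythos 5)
Your proposal is correct: the paper states this theorem without proof, citing it as a known consequence of the Sparsification Lemma of~\cite{Sparsification01}, and your derivation---sparsify with parameter $\varepsilon/2$, run the assumed $2^{o(m)}$ solver on each of the at most $2^{(\varepsilon/2)n}$ subformulas with $O(n)$ clauses, together with the trivial direction via $n\le 3m$---is exactly the standard argument behind that citation. The quantifier bookkeeping you flag (uniformizing the family of $\varepsilon$-dependent algorithms into a single $2^{o(n)}$ algorithm, e.g.~by letting $\varepsilon(n)\rightarrow 0$ slowly enough relative to $C(\varepsilon)$) is indeed the only delicate point, and it is routine as you say.
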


Given a (static) graph $G$, we denote by $V(G)$ and $E(G)$ the sets of its
vertices and edges, respectively. An edge between two vertices $u$ and $v$
of $G$ is denoted by $uv$, and in this case $u$ and $v$ are said to be \emph{%
adjacent} in $G$. 
We consider here temporal graphs with a finite set of integer labels assigned to every edge;
the maximum label assigned by $\lambda $ to an edge of~$G$, called the \emph{lifetime} of $(G,\lambda )$, is denoted by $%
T(G,\lambda )$, or simply by $T$ when no confusion arises. 
That is, $T(G,\lambda )=\max \{t\in \lambda (e):e\in E\}$ and $T$ is \emph{finite}. For every $i,j\in \mathbb{N}
$, where $i\leq j$, we denote $[i,j]=\{i,i+1,\ldots ,j\}$. Throughout the paper  we refer to each
integer $t\in \lbrack 1,T]$ as a \emph{time point} (or \emph{time slot}) of $%
(G,\lambda )$. The \emph{instance} (or \emph{snapshot}) of $(G,\lambda )$ 
\emph{at time }$t$ is the static graph $G_{t}=(V,E_{t})$, where $%
E_{t}=\{e\in E:t\in \lambda (e)\}$. For every $i,j\in \lbrack 1,T]$, where $%
i\leq j$, we denote by $(G,\lambda )|_{[i,j]}$ the restriction of $%
(G,\lambda )$ to the time slots $i,i+1,\ldots ,j$, i.e.~$(G,\lambda
)|_{[i,j]}$ is the sequence of the instances $G_{i},G_{i+1},\ldots ,G_{j}$.
We assume in the remainder of the paper that every edge of $G$ appears in at
least one time slot in $\{1,...,T\}$, namely $\bigcup_{t=1}^{T}E_{t}=E$.

Although some optimization problems on temporal graphs may be hard to solve
in the worst case, an optimal solution may be efficiently computable when
the input temporal graph $(G,\lambda )$ has special properties, i.e.~if $%
(G,\lambda )$ belongs to a special \emph{temporal graph class} (or \emph{%
time-varying graph class}~\cite{flocchini1,CasteigtsFloccini12}). To specify
a temporal graph class we can restrict (a) the \emph{underlying topology} $G$, 
or (b) the \emph{time-labeling} $\lambda $, i.e.~the temporal pattern in
which the time-labels appear, or both. 

\begin{definition}
\label{temporal-classes-def}
Let $(G,\lambda )$ be a temporal graph and let $\mathcal{X}$ be a class of (static) graphs. 
If~$G\in \mathcal{X}$ then $(G,\lambda)$ is an \emph{$\mathcal{X}$ temporal graph}. 
On the other hand, if $G_{i}\in \mathcal{X}$ for every $i\in \lbrack 1,T]$,
then $(G,\lambda )$ is an \emph{always $\mathcal{X}$ temporal graph}.
\end{definition}

In the remainder of the paper we denote by $n=|V|$ and $m=|E|$ the number of
vertices and edges of the underlying graph $G$, respectively, unless
otherwise stated. Furthermore, unless otherwise stated, we assume that the
labeling $\lambda $ is arbitrary, i.e.~$(G,\lambda )$ is given with an
explicit list of labels for every edge. That is, the \emph{size} of the
input temporal graph $(G,\lambda )$ is $O\left(
|V|+\sum_{t=1}^{T}|E_{t}|\right) =O(n+mT)$. In other cases, where $\lambda $
is more restricted, e.g.~if $\lambda $ is periodic or follows another
specific temporal pattern, there may exist a more succinct representations
of the input temporal graph.

For every $u\in V$ and every time slot $t$, we denote the \emph{appearance
of vertex} $u$ \emph{at time} $t$ by the pair $(u,t)$. That is, every vertex 
$u$ has $T$ different appearances (one for each time slot) during the
lifetime of $(G,\lambda )$. Similarly, for every vertex subset $S\subseteq V$
and every time slot $t$ we denote the \emph{appearance of set} $S$ \emph{at
time} $t$ by $(S,t)$. With a slight abuse of notation, we write $(S,t)=\bigcup
_{v\in S}(v,t)$. A \emph{temporal vertex subset} of $(G,\lambda )$ is a set $%
\mathcal{S}\subseteq \{(v,t):v\in V,1\leq t\leq T\}$ of vertex appearances
in $(G,\lambda )$. Given a temporal vertex subset $\mathcal{S}$, for every
time slot $t\in \lbrack 1,T]$ we denote by $\mathcal{S}_{t}=\{(v,t) : (v,t)\in \mathcal{S}\}$ the set of
all vertex appearances in $\mathcal{S}$ at the time slot $t$. 
Similarly, for any pair of time slots $i,j\in [1,T]$, where $i\leq j$, $\mathcal{S}|_{[i,j]}$ is the 
restriction of the vertex appearances of $\mathcal{S}$ within the time slots $i,i+1, \ldots, j$. 
Note that the \emph{cardinality} of the temporal vertex subset $\mathcal{S}$ is $|\mathcal{S}%
|=\sum_{1\leq t\leq T}|\mathcal{S}_{t}|$.

\subsection{\label{TVC-subsec}Temporal Vertex Cover}

Let $\mathcal{S}$ be a temporal vertex subset of $(G,\lambda )$. Let $%
e=uv\in E$ be an edge of the underlying graph $G$ and let $(w,t)$ be a
vertex appearance in $\mathcal{S}$. We say that vertex $w$ \emph{covers} the edge $e$ if $w\in
\{u,v\}$, i.e.~$w$ is an endpoint of $e$; in that case, edge $e$ is \emph{%
covered} by vertex $w$. Furthermore we say that the vertex appearance $(w,t)$
\emph{temporally covers} the edge $e$ if (i) $w$ covers $e$ and (ii) $t\in
\lambda (e)$, i.e.~the edge $e$ is \emph{active} during the time slot $t$;
in that case, edge $e$ is \emph{temporally covered} by the vertex appearance 
$(w,t)$. We now introduce the notion of a \emph{temporal vertex cover} and
the optimization problem \textsc{Temporal Vertex Cover}.

\begin{definition}
\label{temporal-vertex-cover-def}Let $(G,\lambda )$ be a temporal graph. A 
\emph{temporal vertex cover} of $(G,\lambda )$ is a temporal vertex subset $%
\mathcal{S}\subseteq\{(v,t):v\in V,1\leq t\leq T\}$ of $(G,\lambda )$ such
that every edge $e\in E$ is \emph{temporally covered} by at least one vertex
appearance $(w,t)$ in $\mathcal{S}$.
\end{definition}

\vspace{0,1cm} \noindent \fbox{ 
\begin{minipage}{0.96\textwidth}
 \begin{tabular*}{\textwidth}{@{\extracolsep{\fill}}lr} \textsc{Temporal Vertex Cover} \ \ (\textsc{TVC}) & \\ \end{tabular*}
 
  \vspace{1.2mm}
{\bf{Input:}}  A temporal graph $(G,\lambda)$.\\
{\bf{Output:}} A temporal vertex cover $\mathcal{S}$ of $(G,\lambda)$ with the smallest cardinality $|\mathcal{S}|$.
\end{minipage}} \vspace{0,3cm}

Note that \emph{\textsc{TVC}} is a natural temporal extension of the problem 
\textsc{Vertex Cover} on static graphs. 
In fact, \textsc{Vertex Cover} is the special case of \textsc{TVC} where $T=1$. 
Thus \textsc{TVC} is clearly NP-complete, as it also trivially belongs to NP.

\subsection{\label{Sliding-TVC-subsec}Sliding Window Temporal Vertex Cover}

In many real-world applications where scalability is important, the
lifetime $T$ can be arbitrarily large. In
such cases it may not be satisfactory enough that an edge is temporally
covered just \emph{once} during the whole lifetime of the temporal graph.
Instead, in such cases it makes sense that every edge is temporally covered
by some vertex appearance at least once during \emph{every small period} $%
\Delta$ of time, regardless of how large the lifetime $T$ is. Motivated by
this, we introduce in this section a natural \emph{sliding window} variant
of the \emph{\textsc{TVC}} problem, which offers a greater scalability of
the solution concept.

For every time slot $t\in \lbrack 1,T-\Delta +1]$, we define the \emph{%
time window} $W_{t}=[t,t+\Delta -1]$ as the sequence of the $\Delta $
consecutive time slots $t,t+1,\ldots ,t+\Delta -1$. We denote by $\mathcal{W}%
(T,\Delta )=\{W_{1},W_{2},\ldots ,W_{T-\Delta +1}\}$ the set of all time
windows in the lifetime of $(G,\lambda )$. Furthermore we denote by $%
E[W_{t}]=\bigcup _{i\in W_{t}}E_{i}$ the union of all edges appearing at least
once in the time window $W_{t}$. Finally we denote by $\mathcal{S}%
[W_{t}]=\{(v,t)\in \mathcal{S}:t\in W_{t}\}$ the restriction of the temporal
vertex subset $\mathcal{S}$ to the window $W_{t}$. We are now ready to
introduce the notion of a \emph{sliding $\Delta $-window temporal vertex
cover} and the optimization problem \textsc{Sliding Window Temporal Vertex
Cover}.

\begin{definition}
\label{sliding-temporal-vertex-cover-def} Let $(G,\lambda)$ be a temporal
graph with lifetime $T$ and let $\Delta \leq T$. A \emph{sliding }$\Delta $%
\emph{-window temporal vertex cover} of $(G,\lambda )$ is a temporal vertex
subset $\mathcal{S}\subseteq\{(v,t):v\in V,1\leq t\leq T\}$ of $(G,\lambda )$
such that, for every time window $W_{t}$ and for every edge $e\in E[W_{t}]$, 
$e$ is \emph{temporally covered} by at least one vertex appearance $(w,t)$
in $\mathcal{S}[W_{t}]$.
\end{definition}

\vspace{0cm} \noindent \fbox{ 
\begin{minipage}{0.96\textwidth}
 \begin{tabular*}{\textwidth}{@{\extracolsep{\fill}}lr} \textsc{Sliding Window Temporal Vertex Cover} \ \ (\textsc{SW-TVC}) & \\ \end{tabular*}
 
  \vspace{1.2mm}
{\bf{Input:}}  A temporal graph $(G,\lambda)$ with lifetime $T$, and an integer $\Delta \leq T$.\\
{\bf{Output:}} A sliding $\Delta$-window temporal vertex cover $\mathcal{S}$ of $(G,\lambda)$ with the smallest cardinality $|\mathcal{S}|$.
\end{minipage}} \vspace{0,3cm}

Whenever the parameter $\Delta $ is a fixed constant, 
we will refer to the above problem as the $\Delta $-\textsc{TVC} 
(i.e.~$\Delta $ is now a part of the problem name). 
Note that the problem \textsc{TVC} defined in Section~\ref{TVC-subsec} is the 
special case of \textsc{SW-TVC} where $\Delta =T$, 
i.e.~where there is only one $\Delta$-window in the whole temporal graph. 
Another special case\footnote{The 
problem $1$-\textsc{TVC} has already been investigated under the name ``evolving vertex cover'' 
in the context of maintenance algorithms in dynamic graphs~\cite{casteigts}; 
similar ``evolving'' variations of other graph covering problems have also been considered, e.g.~the ``evolving dominating set''~\cite{flocchini2}.} 
of \textsc{SW-TVC} is the problem $1$-\textsc{TVC}, 
whose optimum solution is obtained by iteratively solving the (static) 
problem \textsc{Vertex Cover} on each of the $T$ static instances of $(G,\lambda)$; 
thus $1$-\textsc{TVC} fails to fully capture the time dimension in temporal graphs.

\section{Hardness and approximability of \textsc{TVC}}
\label{sec:hardness-approx-TVC}

In this section we investigate the complexity of \textsc{Temporal Vertex Cover} (\textsc{TVC}). 
In Theorems~\ref{thm:set-cover-star} and~\ref{thm:hitting-set-star} we prove our 
hardness results for \textsc{TVC} on star temporal graphs 
(i.e.~when the underlying graph~$G$ is a star, cf.~Definition~\ref{temporal-classes-def}), 
which is in wide contrast to the (trivial) solution of \textsc{Vertex Cover} on a static star graph. 
The hardness results are obtained via reductions to the problems \textsc{Set Cover} and \textsc{Hitting Set}, 
respectively. 
On the positive side we prove in Theorem~\ref{TVC-general-approximation} that, on general temporal graphs, 
\textsc{TVC} can be approximated within a factor of $H_{n-1} - \frac{1}{2}\approx \ln n$, 
via a reduction to \textsc{Set Cover}.

\begin{theorem}
\label{thm:set-cover-star}
\textsc{TVC} on star temporal graphs is NP-complete. 
Furthermore, for any $\varepsilon>0$, \textsc{TVC} on star temporal graphs 
does not admit any polynomial-time $(1-\varepsilon)\ln{n}$-approximation algorithm, 
unless NP has $n^{O(\log \log n)}$-time deterministic algorithms.
\end{theorem}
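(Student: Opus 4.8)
The plan is to establish a tight, objective-preserving equivalence between \textsc{TVC} on star temporal graphs and \textsc{Set Cover}, and then to transfer all three claims through this equivalence. The starting point is a structural observation about stars: if the underlying graph has center $c$ and leaves $v_1,\dots,v_{n-1}$, then a single center appearance $(c,t)$ temporally covers \emph{exactly} the edges of the snapshot $E_t$, whereas a leaf appearance $(v_i,s)$ with $s\in\lambda(cv_i)$ covers only the single edge $cv_i$. Hence covering every edge of $(G,\lambda)$ is the same as covering the ground set $\{cv_1,\dots,cv_{n-1}\}$ using the family of snapshot sets $\{E_t : 1\le t\le T\}$ together with all singletons, where the singletons correspond precisely to leaf appearances.

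For NP-completeness and hardness I would reduce \textsc{Set Cover} to star \textsc{TVC}: given a universe $U=\{u_1,\dots,u_{n-1}\}$ and a family $S_1,\dots,S_T\subseteq U$, build a star with one leaf $v_i$ per element and set $\lambda(cv_i)=\{t : u_i\in S_t\}$. A center appearance $(c,t)$ then plays the role of selecting $S_t$, so the minimum \textsc{TVC} equals the minimum number of sets covering $U$ when singletons are also permitted. Membership in NP is immediate, since a temporal vertex subset can be checked against every edge at every active slot in polynomial time, and NP-hardness follows from this reduction. The reverse reduction (star \textsc{TVC} to \textsc{Set Cover}) is the literal inverse of the same correspondence and delivers the algorithmic direction: running the greedy \textsc{Set Cover} algorithm on the resulting instance, whose ground set has size $n-1$, and invoking the refined analysis that shaves off the additive $\tfrac12$, yields the claimed polynomial-time $(H_{n-1}-\tfrac12)$-approximation (every greedily chosen snapshot set or singleton is realized back as a center or leaf appearance of the same cost).

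For the inapproximability I would compose the forward reduction with Feige's theorem that \textsc{Set Cover} over a universe of size $N$ admits no polynomial-time $(1-\varepsilon)\ln N$-approximation unless $\mathrm{NP}\subseteq\mathrm{DTIME}(N^{O(\log\log N)})$. Here $N=n-1$, so $\ln N=\ln(n-1)=(1-o(1))\ln n$ and the $(1-\varepsilon)\ln n$ factor transfers after absorbing the $o(1)$ into $\varepsilon$.

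The main obstacle is making the forward reduction genuinely approximation-preserving despite the unavoidable presence of singletons in a star: I must ensure that leaf appearances do not collapse the optimum in the ``no''-instances of Feige's gap. The plan to resolve this is to observe that singletons leave the optimum unchanged whenever the optimum is at most the ground-set size, and to verify that in the gap construction the ``no''-case optimum remains well below $N$. Concretely, any mixed cover using $a$ snapshot sets and $b$ singletons satisfies $aM+b\ge N$, where $M$ bounds the snapshot-set sizes, whence its total size is $a+b\ge N/M+b(1-1/M)\ge N/M$; for the parameters of the construction this quantity still exceeds $(1-\varepsilon)(\ln N)\cdot\mathrm{OPT}_{\mathrm{yes}}$, so the gap, and therefore the $\ln n$ inapproximability factor, survives the reduction.
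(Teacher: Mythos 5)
Your overall architecture is the same as the paper's: a two-way, cost-preserving correspondence between \textsc{TVC} on stars and \textsc{Set Cover}, the greedy algorithm of Duh--F\"urer on a ground set of the $n-1$ edges for the $(H_{n-1}-\frac{1}{2})$ upper bound, and Feige's theorem (absorbing the $\ln(n-1)$ versus $\ln n$ discrepancy into $\varepsilon$) for the lower bound. The genuine gap is in your resolution of the singleton obstacle. From $aM+b\geq N$ you correctly derive that any mixed cover has size $a+b\geq N/M$, but this bound cannot certify the ``no''-case in Feige's gap: in the ``yes''-instances the optimal cover consists of (essentially disjoint) sets partitioning the universe, so some set has size at least $N/\mathrm{OPT}_{\mathrm{yes}}$, i.e.~$M\geq N/\mathrm{OPT}_{\mathrm{yes}}$ and hence $N/M\leq \mathrm{OPT}_{\mathrm{yes}}$ --- a full $\ln N$ factor below the threshold $(1-\varepsilon)(\ln N)\cdot\mathrm{OPT}_{\mathrm{yes}}$ you need to exceed. (Indeed, $N/M$ is a lower bound that holds for the ``yes''-instances too, so it can never separate the two cases.) Your auxiliary claim that ``singletons leave the optimum unchanged whenever the optimum is at most the ground-set size'' is likewise not established by this counting, so as written the hardness transfer fails.

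The repair is already contained in your own opening observation, and it is exactly how the paper handles this point. A leaf appearance $(v_i,s)$ temporally covers $cv_i$ only when $s\in\lambda(cv_i)$, i.e.~only when $u_i\in S_s$; hence every \emph{available} singleton is a subset of a snapshot set available at the \emph{same} time slot, and replacing $(v_i,s)$ by the center appearance $(c,s)$ never increases the cardinality of a solution (this domination always applies, since the paper assumes every edge is active at least once, i.e.~$\bigcup_t E_t = E$). Consequently any feasible temporal vertex cover of size $\sigma$ converts in polynomial time into a center-only one --- equivalently, a set cover using only the original sets $S_t$ --- of size at most $\sigma$, so the two optima coincide unconditionally and \emph{any} approximation ratio transfers verbatim, with no analysis of Feige's parameters whatsoever. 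This is precisely the paper's step asserting that a minimum-cardinality temporal vertex cover of a star may be assumed to satisfy $\mathcal{S}_i\in\bigl\{\emptyset,\{(c,i)\}\bigr\}$ for every time slot $i$. With that one substitution, the remainder of your proposal (NP membership and hardness, and the algorithmic direction via greedy) is correct and matches the paper.
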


\begin{proof}
First we reduce \textsc{Set Cover} to \textsc{TVC} on star temporal graphs.
 
\vspace{0,1cm} \noindent \fbox{ 
\begin{minipage}{0.96\textwidth}
 \begin{tabular*}{\textwidth}{@{\extracolsep{\fill}}lr} \textsc{Set Cover} & \\ \end{tabular*}
 
  \vspace{1.2mm}
{\bf{Input:}}  A universe $U = \{1,2,\ldots, n\}$ and a collection of $\mathcal{C}=\{C_1,C_2,\ldots,C_m\}$ of $m$ subsets of $U$ such that $\bigcup_{i=1}^{m}C_{i}=U$.\\
{\bf{Output:}} A subset $\mathcal{C}'\subseteq\mathcal{C}$ with the smallest cardinality such that $\bigcup_{C_i \in \mathcal{C}'} C_i =U$.
\end{minipage}} \vspace{0,3cm}
 
Given an instance $(U,\mathcal{C})$ of \textsc{Set Cover}, we construct an equivalent instance $(G,\lambda)$ 
of \textsc{TVC} on star temporal graphs as follows. 
We set $T = m$ and we let $G$ be a star graph on $n+1$ vertices, with center $c$ 
and leaves $v_1, v_2, \ldots, v_n$. 
The labeling $\lambda$ is such that, at every time slot $i\in[1,m]$, $G_i$ contains 
a set of isolated vertices and 
a star centered at $c$ and having the vertices $v_j$ as leaves, where $j \in C_i$.

We will now prove that there exists a temporal vertex cover $\mathcal{S}$ in $(G,\lambda)$ 
such that $|\mathcal{S}|\leq k$ if and only if there exists a set cover $\mathcal{C}'$ of $(U,\mathcal{C})$ such that $|\mathcal{C}'|\leq k$.

\begin{itemize}
\item[($\Rightarrow$)] Let $\mathcal{S}$ be a minimum-cardinality temporal vertex cover of $(G,\lambda)$, 
and let $|\mathcal{S}|\leq k$. 
Since $\mathcal{S}$ has minimum cardinality and $G$ is a star, we can assume without loss of generality that 
for every $i=1,2,\ldots, m$, 
either $\mathcal{S}_{i} = \{(c,i)\}$ or $\mathcal{S}_{i}=\emptyset$. 
Then the collection $\mathcal{C}' = \{C_{i}\in\mathcal{C} : S_i \not= \emptyset \}$ is a set cover of $U$.
Indeed, if $S_i \not= \emptyset$ then the appearance $(c,i)$ in $\mathcal{S}$ covers all the 
edges $cv_j$ of $G$, where $j\in C_i$. Thus, as the sequence of all non-empty sets $S_i$
covers all edges of $(G,\lambda)$, it follows that the union of all sets $C_i\in \mathcal{C}'$ 
covers all elements of $U=\{1,2,\ldots,n\}$. 
Finally, since $|\mathcal{S}|\leq k$, it follows by construction that $|\mathcal{C}|\leq k$ .

\item[($\Leftarrow$)] Let $\mathcal{C}^{\prime }$ be an optimal solution to \textsc{Set Cover}, 
and let $|\mathcal{C}'|\leq k$. 
We define the temporal vertex set $\mathcal{S} = \{(c,i) : C_{i}\in \mathcal{C}'\}$. 
For every $C_{i}\in \mathcal{C}'$, the appearance of $(c,i)$ in $\mathcal{S}$ covers all edges $cv_{j}$ of $G$, where $j\in C_{i}$. Thus, as the sets of $\mathcal{C}'$ cover all elements of $U=\{1,2,\ldots,n\}$, 
it follows that $\mathcal{S}$ is a temporal vertex cover of $(G,\lambda)$.
Finally, since $|\mathcal{C}|\leq k$, it follows by construction that $|\mathcal{S}|\leq k$ .
\end{itemize}

Therefore \textsc{TVC} on star temporal graphs is NP-complete. 
Moreover it is known that, for any $\varepsilon>0$, \textsc{Set Cover} cannot be approximated 
in polynomial time within a factor of $(1-\varepsilon)\ln n$ unless NP has $n^{O(\log \log n)}$-time 
deterministic algorithms~\cite{Feige98}. 
Therefore, due to the above polynomial-time reduction, it follows that \textsc{TVC} on star temporal graphs does not admit such an approximation algorithm as well.
\end{proof}

In the next theorem we complement our hardness results for \textsc{TVC} on star temporal graphs by 
reducing \textsc{Hitting Set} to it.

\begin{theorem}
\label{thm:hitting-set-star}
For every $\varepsilon<1$, \textsc{TVC} on star temporal graphs cannot be optimally solved 
in~$O(2^{\varepsilon T})$ time, unless the Strong Exponential Time Hypothesis (SETH) fails.
\end{theorem}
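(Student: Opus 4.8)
The plan is to piggyback on the equivalence between \textsc{TVC} on star temporal graphs and \textsc{Set Cover} proved in Theorem~\ref{thm:set-cover-star}, but to read it in its dual form so that the bound lands on the right parameter. Recall from that equivalence that a star on $n$ vertices (a center $c$ and $n-1$ leaves) is, optimally, the same as a \textsc{Set Cover} instance whose universe is the set of leaves and whose candidate sets are the time slots: a center appearance $(c,t)$ pays $1$ and ``covers'' exactly the leaves whose edge is active at $t$. Transposing the element--set incidence, this is precisely \textsc{Hitting Set} where the family $\mathcal{F}=\{F_1,\dots,F_{n-1}\}$ plays the role of the leaves and the ground set $X$ plays the role of the time slots. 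Concretely, from a \textsc{Hitting Set} instance I would build a star with one leaf $\ell_j$ per set $F_j$ and one time slot per element $x\in X$, declaring $\ell_j$ active at the slot of $x$ iff $x\in F_j$; then selecting the center appearance at the slot of $x$ means putting $x$ into the hitting set, covering all leaves $\ell_j$ with $x\in F_j$. A minimum temporal vertex cover of this star is therefore a minimum hitting set, and crucially the number of vertices of the star is $(1+o(1))|\mathcal{F}|$, i.e.\ the bound transfers in the \emph{number of sets} of the \textsc{Hitting Set} instance.

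With this (parameter-preserving) translation settled, the real work is a reduction from \textsc{CNF-SAT} that produces a \textsc{Hitting Set} instance whose family $\mathcal{F}$ has size $(1+o(1))N$, where $N$ is the number of variables. The encoding should make the minimum hitting set coincide with a satisfying assignment: choosing an element of $X$ corresponds to committing to part of a truth assignment, and the sets $F_j\in\mathcal{F}$ are the constraints that this assignment must satisfy. The obvious version---introduce one ground element per literal, force one literal per variable through a constraint $\{x_i,\overline{x_i}\}$, and add one constraint per clause listing its literals---does make minimum hitting sets equal satisfying assignments, but it is useless here: it creates one set per clause, so $|\mathcal{F}|=N+M$, and hence a number of leaves that is governed by the number of clauses rather than by $N$.

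The step I expect to be the main obstacle is exactly removing this blow-up, that is, bundling \emph{all} of the satisfiability constraints into only $(1+o(1))N$ sets (over a possibly huge ground set) while keeping the minimum-hitting-set value faithful to satisfiability. The standard device I would attempt is to partition the $N$ variables into groups, take the ground set $X$ to consist of the partial assignments of the groups (so $X$ is large but contributes only to the number of \emph{time slots}, which is irrelevant to the exponent), and design a small, near-$N$ number of sets $F_j$ that simultaneously enforce consistency across groups and satisfaction of every clause. Handling both literal polarities without a second copy of the variable elements, and absorbing the clause constraints into the lower-order term of the set count, is where all the care is needed; the rest is the routine check that a hitting set of the intended size exists precisely when the formula is satisfiable.

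Finally I would close the argument by contradiction. Suppose \textsc{TVC} on star temporal graphs with $n$ vertices admits an $O(2^{\varepsilon n})$ algorithm for some fixed $\varepsilon<1$. Feeding it the star built above from the \textsc{Hitting Set} instance solves \textsc{Hitting Set}, and therefore \textsc{CNF-SAT}, in time $O\!\left(2^{\varepsilon n}\right)$ with $n=(1+o(1))N$; since $\varepsilon<1$, for large enough inputs $\varepsilon(1+o(1))<1$, giving a \textsc{CNF-SAT} algorithm with running time $2^{\varepsilon' N}$ for some $\varepsilon'<1$ and contradicting SETH. The entire delicacy of the proof is thus concentrated in the tightness of the parameter: because Theorem~\ref{thm:set-cover-star} is already exact in $n$, everything hinges on the \textsc{CNF-SAT}-to-\textsc{Hitting Set} reduction keeping the number of sets at $N+o(N)$ rather than at a constant multiple of $N$, which is what upgrades the lower bound from the easy range $\varepsilon<\tfrac12$ to the full range $\varepsilon<1$.
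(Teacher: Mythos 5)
Your opening move coincides with the paper's: the paper also reduces \textsc{Hitting Set} to \textsc{TVC} on stars with one leaf per set $C_j$ and one time slot per universe element, so that choosing the center appearance $(c,i)$ corresponds to putting element $i$ into the hitting set (this is the dual-flavored twin of the reduction in Theorem~\ref{thm:set-cover-star}, exactly as you observe). But from there the two arguments diverge in \emph{which parameter carries the exponent}. The paper stops at this reduction and invokes the known SETH lower bound of~\cite{Cygan16} for \textsc{Hitting Set} parameterized by the \emph{universe size}; since universe elements become time slots, the paper's proof literally assumes ``an $O(2^{\varepsilon T})$-time algorithm'' and derives the contradiction with the exponent measured in the lifetime $T$, which in this construction equals the universe size $n$ of the \textsc{Hitting Set} instance --- while the number of vertices of the star is $m+1$, the number of sets. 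You instead read $n$ as the number of vertices, and you correctly conclude that this forces you to prove something stronger: SETH-hardness of \textsc{Hitting Set} parameterized by the \emph{number of sets}, via a \textsc{CNF-SAT} reduction producing only $(1+o(1))N$ sets.

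That step --- which you yourself flag as ``the main obstacle'' and only sketch via variable-grouping --- is the fatal gap, and it is not a routine sparsification exercise. By the very element/set duality you invoke at the start, \textsc{Hitting Set} parameterized by the number of sets \emph{is} \textsc{Set Cover} parameterized by the universe size, and whether SETH rules out $O^{*}(2^{\varepsilon n})$ algorithms for the latter is a well-known open question: this is precisely why Cygan et al.~\cite{Cygan16} introduce the Set Cover Conjecture as a hypothesis separate from SETH rather than deriving it. The grouping construction you gesture at (ground set = partial assignments of variable blocks, contributing only to the number of time slots; a near-$N$ family of sets enforcing consistency and clause satisfaction) is exactly what one would try, and no such construction is known; completing your plan would show that SETH implies (a form of) the Set Cover Conjecture, a result well beyond this paper. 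Your final contradiction paragraph is fine \emph{modulo} that missing lemma, and to your credit your careful parameter bookkeeping exposes a genuine discrepancy in the source: the theorem as stated puts the exponent on $n$, but the proof the paper gives only supports the bound with the exponent on $T$ (equivalently, on the \textsc{Hitting Set} universe size). The $T$-version is what the paper actually establishes; the $n$-version you set out to prove is, with current techniques, open.
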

\begin{proof}
The proof is done via a reduction of \textsc{Hitting Set} to \textsc{TVC} on star temporal graphs. 
This reduction has a similar flavor as the one presented in Theorem~\ref{thm:set-cover-star}, 
since the problems \textsc{Set Cover} and \textsc{Hitting Set} are in a sense dual to each other
\footnote{That is seen by observing that an instance of \textsc{Set Cover}, with $U=\{1,\ldots,n\}$ and $\mathcal{C}=\{C_1, \ldots, C_m\}$, can be viewed as an instance of \textsc{Hitting Set}, with $U^{\ast}=\{1,\ldots,m\}$ and $\mathcal{C}^{\ast}=\{C_1^{\ast}, \ldots, C_n^{\ast} \}$, where $C_i^{\ast} = \{j : i \in C_j \}$, and vice versa.}. 
We first present the definition \textsc{Hitting Set}.

\vspace{0,1cm} \noindent \fbox{ 
\begin{minipage}{0.96\textwidth}
 \begin{tabular*}{\textwidth}{@{\extracolsep{\fill}}lr} \textsc{Hitting Set} & \\ \end{tabular*}
 
  \vspace{1.2mm}
{\bf{Input:}}  A universe $U = \{1,2,\ldots, n\}$ and a collection of $\mathcal{C}=\{C_1,C_2,\ldots,C_m\}$ of $m$ subsets of $U$ such that $\bigcup_{i=1}^{m}C_{i}=U$.\\
{\bf{Output:}} A subset $U'\subseteq U$ with the smallest cardinality such that $U'$ contains at least one element from each set in $\mathcal{C}$.
\end{minipage}} \vspace{0,3cm}

Given an instance $(U,\mathcal{C})$ of \textsc{Hitting Set}, we construct an equivalent instance 
of \textsc{TVC} on star temporal graphs as follows. 
We set $T = n$ and let $G$ be a star on $m+1$ vertices with center vertex $c$ 
and leaves $v_1,v_2, \ldots, v_m$. 
The labeling $\lambda$ is such that, at every time slot $i\in [1,n]$, 
$G_i$ contains a set of isolated vertices and 
a star centered at $c$ and having the vertices $v_j$ as leaves, where $i \in C_j$. 
That is, the $m$ leaves now correspond to the $m$ subsets in the family $\mathcal{C}$. 
Following a similar argumentation as in Theorem~\ref{thm:set-cover-star}, it follows that 
there exists a temporal vertex cover $\mathcal{S}$ in $(G,\lambda)$ 
such that $|\mathcal{S}|\leq k$ if and only if there exists a hitting set $U'\subseteq U$ 
of $(U,\mathcal{C})$ such that $|U'|\leq k$.

Assume now that there exists an $O(2^{\varepsilon T})$-time algorithm for optimally solving \textsc{TVC} on star temporal graphs, for some $\varepsilon<1$. 
Then, due to the above reduction from \textsc{Hitting Set}, we can use this algorithm to optimally solve 
\textsc{Hitting Set} in $O(2^{\varepsilon n})$-time. 
This is a contradiction, unless the Strong Exponential Time Hypothesis (SETH) fails~\cite{Cygan16}. 
\end{proof}

Note that the above construction in the proof of Theorem \ref{thm:hitting-set-star} 
can be trivially reverted, thus providing the inverse reduction, i.e.~from
\textsc{TVC} on star temporal graphs to \textsc{Hitting Set}. 
In the obtained instance of \textsc{Hitting Set} produced by this reduction, 
the maximum size of any set is equal to the maximum number $\ell$ of time slots 
at which any fixed edge appears. 
Therefore, whenever $k,\ell$ are constants, 
known results on \textsc{Hitting Set} (see e.g.~\cite{niedermeier2003efficient}) immediately imply 
a linear-time algorithm for deciding whether there is a temporal vertex cover of size at most $k$ 
in a star temporal graph, in which any fixed edge appears in at most $\ell$ time slots.
Now we provide our positive results for \textsc{TVC} on general temporal graphs.

\begin{theorem} \label{TVC-general-approximation}
\textsc{TVC} on general temporal graphs can be approximated in polynomial time within a factor of at most $H_{n-1} - \frac{1}{2}$.
\end{theorem}

\begin{proof}
The proof is done via a reduction of \textsc{TVC} to \textsc{Set Cover}. 
Given an instance $(G,\lambda)$ of \textsc{TVC}, we construct an instance $(U,\mathcal{C})$ of \textsc{Set Cover} as follows. 
We set the universe $U$ to be $E(G)$, and for every vertex appearance $(v,i)$ of $(G,\lambda)$ we add to $\mathcal{C}$ the set $C_{v,i}$
of those edges that are temporally covered by $(v,i)$, i.e. $C_{v,i} = \{e:~v\text{ is an endpoint of } e \text{ and } e\in E_i\}$. 
Note that, in this instance of \textsc{Set Cover}, every set $C_{v,i}$ has at most $n-1$ elements.

Now we show that there exists a temporal vertex cover $\mathcal{S}$ in $(G,\lambda)$ 
with $|\mathcal{S}|\leq k$ if and only if there exists a set cover $\mathcal{C}' \subseteq \mathcal{C}$ of $U$ with $|\mathcal{C}'|\leq k$.

\begin{itemize}
	\item[($\Rightarrow$)] Let $\mathcal{S}$ be a temporal vertex cover of $(G,\lambda)$ 
	with $|\mathcal{S}|\leq k$. 
	Then there exist $k$ vertex appearances that temporally cover all edges of $G$. By the construction, the sets in $\mathcal{C}$,
	corresponding to these vertex appearances, cover $U$ and therefore they constitute a set cover $\mathcal{C}'$ of size at most $k$.
	
	\item[($\Leftarrow$)] Let $\mathcal{C}^{\prime }$ be a set cover of $(U,\mathcal{C})$ with $|\mathcal{C}'|\leq k$.
	Since every $C_{v,i} \in \mathcal{C}^{\prime}$ contains the edges that are temporally covered by $(v,i)$,
	it follows that $\mathcal{S} = \{(v,i) : C_{v,i}\in \mathcal{C}'\}$ is a temporal vertex cover of $(G,\lambda)$ of size at most $k$.
\end{itemize}

Therefore we can compute an approximate solution to \textsc{TVC} on $(G,\lambda)$ by first 
computing an approximate solution to \textsc{Set Cover} on $(U,\mathcal{C})$, 
achieving the same approximation factor for both problems. 
Now we apply the polynomial-time approximation algorithm of~\cite{DuhFurer97} for \textsc{Set Cover} 
which achieves a ratio of $H_{n-1}-\frac{1}{2}$, where $H_{n-1}$ is the $(n-1)$-th harmonic number 
and $n-1$ is an upper bound on the size of the sets in the instance $(U,\mathcal{C})$. 
Since $H_{n-1} \leq \ln n + 1$, the theorem follows.
\end{proof}

\section{An almost tight algorithm for \textsc{SW-TVC}}
\label{sec:tight-algorithms-SW-TVC}

In this section we investigate the complexity of \textsc{Sliding Window Temporal Vertex Cover} (\textsc{SW-TVC}). 
First we prove in Section~\ref{subsec:lower-bound-ETH} a strong lower bound on the complexity of 
optimally solving this problem on arbitrary temporal graphs. 
More specifically we prove that, for \emph{any} (arbitrarily growing) 
functions $f:\mathbb{N}\rightarrow \mathbb{N}$ and $g:\mathbb{N}\rightarrow \mathbb{N}$, there 
exists a constant $\varepsilon \in (0,1)$ such that \textsc{SW-TVC} cannot be solved 
in $f(T)\cdot 2^{\varepsilon n \cdot g(\Delta)}$ time, assuming the Exponential Time Hypothesis (ETH). 
This ETH-based lower bound turns out to be asymptotically almost tight. 
In fact, we present in Section~\ref{subsec:dynam-prog} an exact dynamic programming algorithm for \textsc{SW-TVC} whose running time on an arbitrary temporal graph is $O(T\Delta (n+m)\cdot 2^{n(\Delta+1)})$, 
which is asymptotically almost optimal, assuming~ETH. 
That is, although the running time of our algorithm has an exponential part $2^{n (\Delta+1)}$, 
there does not exist (assuming ETH) any algorithm whose running time is subexponential in $n$ 
(i.e.~having $o(n)$ instead of $n$ in the exponent), 
even if we allow an \emph{arbitrarily growing} function~$g(\Delta)$ of $\Delta$ in the exponent. 
In Section~\ref{sec:boundedVCnum} we prove that our algorithm can be refined so that, 
when the vertex cover number of each snapshot $G_{i}$ is bounded by a constant~$k$, 
the running time becomes $O(T\Delta (n+m)\cdot n^{k(\Delta+1)})$. 
That is, whenever $\Delta$ is constant, the algorithm is polynomial in the input size on temporal graphs with bounded vertex cover at every time slot. 
Notably, for the class of always star temporal graphs (i.e.~a superclass of the star temporal graphs 
studied in Section~\ref{sec:hardness-approx-TVC}) the running time of the algorithm is $O(T\Delta (n+m)\cdot 2^{\Delta})$.

\subsection{A complexity lower bound}
\label{subsec:lower-bound-ETH}

In the classic textbook NP-hardness reduction from \textsc{3SAT} to \textsc{Vertex Cover} 
(see e.g.~\cite{gareyjohnson}), the produced instance of \textsc{Vertex Cover} is a graph whose number 
of vertices is linear in the number of variables and clauses of the \textsc{3SAT} instance. 
Therefore the next theorem follows by Theorem~\ref{ETH-thm} (for a discussion 
see also~\cite{Lokshtanov11}). 
\begin{theorem}
\label{Vetrex-Cover-lower-bound}
Assuming ETH, there exists an $\varepsilon_{0}<1$ such that \textsc{Vertex Cover} 
cannot be solved in $O(2^{\varepsilon_{0}n})$, where $n$ is the number of vertices.
\end{theorem}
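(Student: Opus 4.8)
The plan is to compose the textbook polynomial-time reduction from \textsc{3SAT} to \textsc{Vertex Cover} with the size-sensitive form of ETH furnished by Theorem~\ref{ETH-thm}. Recall that in the standard reduction (see e.g.~\cite{gareyjohnson}) a 3-CNF formula $\phi$ with $n$ variables and $m$ clauses is turned into a graph on $N = 2n + 3m$ vertices: one edge gadget per variable (its two literal vertices joined by an edge) and one triangle gadget per clause, with each triangle vertex joined to the corresponding literal vertex, so that $\phi$ is satisfiable if and only if the graph admits a vertex cover of size $n + 2m$. The construction runs in time polynomial in $n+m$, and crucially the number of vertices is \emph{linear} in the total size of $\phi$, since $n + m \le N \le 3(n+m)$.

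First I would extract from ETH a lower bound phrased in the instance size rather than only in the number of variables. ETH asserts that \textsc{3SAT} is not solvable in $2^{\varepsilon n}$ time for some $\varepsilon < 1$; feeding this into Theorem~\ref{ETH-thm} (the consequence of the Sparsification Lemma, which the paper highlights precisely because it ``expresses the running time in terms of the size of the input'') yields a constant $\delta > 0$ such that \textsc{3SAT} cannot be solved in $O(2^{\delta(n+m)})$ time. This step is where the Sparsification Lemma is indispensable: because $m$ may be superlinear in $n$, a bound of the form $2^{\varepsilon n}$ alone says nothing useful about instances measured by $n+m$, yet it is exactly $n+m$ that governs $N$ in the reduction.

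Next I would argue by contradiction. Suppose \textsc{Vertex Cover} could be solved in $O(2^{\varepsilon_0 N})$ time for a value $\varepsilon_0$ to be fixed. Composing the reduction with such an algorithm solves \textsc{3SAT} in time $\mathrm{poly}(n+m) + O(2^{\varepsilon_0 N})$, and using $N \le 3(n+m)$ to bound the exponent, this is $2^{O(\varepsilon_0(n+m))}$ with leading coefficient $3\varepsilon_0$ in the exponent. Choosing $\varepsilon_0 = \min\{\delta/4,\, 1/2\}$, which is strictly less than $1$ and satisfies $3\varepsilon_0 < \delta$, makes this strictly faster than $2^{\delta(n+m)}$, contradicting the previous paragraph. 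Hence no such algorithm exists for this $\varepsilon_0$, establishing the claim with $N$ playing the role of the vertex count $n$ in the statement.

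I expect the only genuinely delicate point to be the passage from the variable-based form of ETH to the size-based lower bound $2^{\delta(n+m)}$, i.e.~the correct invocation of Theorem~\ref{ETH-thm}; everything else is bookkeeping about the constant blow-up $N = \Theta(n+m)$ of the reduction and the elementary choice of $\varepsilon_0$. I would take care to absorb the polynomial factor of the reduction into the exponential and to confirm that the resulting $\varepsilon_0$ lies strictly below $1$ regardless of the magnitude of $\delta$ (the truncation at $1/2$ handles the case $\delta \ge 4$).
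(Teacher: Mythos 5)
Your proposal is correct and takes essentially the same route as the paper, which likewise derives Theorem~\ref{Vetrex-Cover-lower-bound} by composing the textbook \textsc{3SAT}-to-\textsc{Vertex Cover} reduction (noting that the produced graph has a number of vertices linear in the number of variables and clauses) with the Sparsification-Lemma consequence stated in Theorem~\ref{ETH-thm}. Your write-up simply makes explicit the bookkeeping the paper leaves implicit, namely $N = 2n+3m \le 3(n+m)$ and the choice $\varepsilon_0 = \min\{\delta/4,\,1/2\}$.
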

In the the following theorem we prove a strong ETH-based lower bound for \textsc{SW-TVC}. 
This lower bound is asymptotically almost tight, as we present in Section~\ref{subsec:dynam-prog} 
a dynamic programming algorithm for \textsc{SW-TVC} with running time $O(T\Delta(n+m)\cdot 2^{n\Delta})$, 
where $n$ and~$m$ are the numbers of vertices and edges in the underlying graph $G$, respectively.

\begin{theorem}
\label{SW-TVC-lower-bound}
For \emph{any} two (arbitrarily growing) functions $f:\mathbb{N}\rightarrow \mathbb{N}$ 
and $g:\mathbb{N}\rightarrow \mathbb{N}$, there exists a constant $\varepsilon \in (0,1)$ 
such that \textsc{SW-TVC} cannot be solved in $f(T)\cdot 2^{\varepsilon n \cdot g(\Delta)}$ time 
assuming ETH, where $n$ is the number of vertices in the underlying graph $G$ of the temporal graph.
\end{theorem}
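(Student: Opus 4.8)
The plan is to give a single reduction from \textsc{3SAT} that takes the desired window length $\Delta$ as a tunable parameter, and then to instantiate this parameter (together with the constant $\varepsilon$) in a way that defeats any prescribed pair of functions $f,g$. Throughout I would invoke the Sparsification Lemma (Theorem~\ref{ETH-thm}) so that I may restrict attention to 3-CNF formulas $\phi$ with $N$ variables and only $M=O(N)$ clauses; assuming ETH, there is then a constant $c>0$ such that no algorithm solves such formulas in time $2^{cN}$. The target is a temporal graph $(G,\lambda)$ whose number of vertices satisfies $n=\Theta(N/\Delta)$ and whose lifetime is $T=O(\Delta)$ (crucially a function of $\Delta$ \emph{alone}, independent of $N$), such that $\phi$ is satisfiable if and only if $(G,\lambda)$ admits a sliding $\Delta$-window temporal vertex cover of a prescribed cardinality. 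The guiding identity is $n\cdot\Delta=\Theta(N)$, which mirrors the $2^{n(\Delta+1)}$ cost of the dynamic programming algorithm of Section~\ref{subsec:dynam-prog} and is what makes the bound asymptotically tight.

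For the construction I would partition the variables of $\phi$ into $n$ blocks of $\Theta(\Delta)$ variables each, and represent the truth assignment of block $i$ by the \emph{pattern of appearances} of a dedicated block-vertex $w_i$ across the $\Delta$ time slots of a window: the slot $t+j-1$ carries the $j$-th bit of block $i$, so that a single window records an entire assignment of all $N$ variables, and the total cost of encoding it is $\Theta(n\Delta)$ vertex appearances. Each clause is realised by a small constant-size gadget, attached to the three relevant block-vertices at the three relevant time offsets, which can be temporally covered within its window at no extra cost precisely when at least one of its literals is satisfied. Enforcement edges active on consecutive slots would be used to force the encoding of each block to be \emph{consistent} across all overlapping windows, so that the sliding windows collectively certify one global assignment. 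Setting the budget equal to the number of encoding appearances plus the mandatory gadget cost then yields the desired equivalence.

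To obtain the strength ``for all $f,g$'', I would argue by a dichotomy on the quantity $g(\delta)/\delta$. If $\inf_{\delta}g(\delta)/\delta=0$, I choose $\Delta=\Delta(N)\to\infty$ along a sequence on which $g(\Delta)/\Delta\to 0$, and slowly enough that $f(T)=f(O(\Delta(N)))=2^{o(N)}$ (possible for any fixed $f$ since $\Delta(N)$ may grow arbitrarily slowly); then for \emph{any} fixed $\varepsilon$ the hypothetical running time is $f(T)\cdot 2^{\varepsilon n g(\Delta)}=2^{o(N)}\cdot 2^{\varepsilon N\,(g(\Delta)/\Delta)}=2^{o(N)}$, contradicting ETH. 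Otherwise $g(\delta)/\delta\ge\rho_0>0$ for all $\delta$; here I fix $\Delta=\delta^\ast$ to a constant that nearly minimises $g(\delta)/\delta$, so that $T$, $f(T)$ and $g(\Delta)$ are all constants and $n=\Theta(N)$, and I pick $\varepsilon$ small enough that $\varepsilon\cdot\Theta(1)\cdot g(\delta^\ast)<c$; the hypothetical running time is then $2^{\varepsilon'N}$ with $\varepsilon'<c$, again contradicting ETH. In both cases a single constant $\varepsilon\in(0,1)$ is exhibited, as required.

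The main obstacle is the simultaneous control of two competing demands in the construction: on one hand the lifetime $T$ must remain a function of $\Delta$ only (so that the arbitrary factor $f(T)$ can be absorbed as $2^{o(N)}$), which forces me to pack all $\Theta(N)=\Theta(n\Delta)$ clause gadgets into only $O(\Delta)$ windows, hence $\Theta(n)$ clauses per window; on the other hand the appearance pattern of each block-vertex must encode the \emph{same} assignment in every window it touches, which the sliding (overlapping) windows make delicate, since a naive gadget for one window leaks covering constraints into its neighbours. Reconciling these --- designing clause and consistency gadgets of size independent of $N$ that are ``cost-free'' exactly under a satisfying assignment and inert in neighbouring windows --- is the crux of the proof; once it is in place, the parameter-balancing above is routine.
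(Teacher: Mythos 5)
Your parameter-balancing dichotomy is sound in principle, but the proof as written has a genuine gap: the reduction it rests on is never constructed. Everything hinges on compressing a 3-CNF formula with $N$ variables into a temporal graph with $n=\Theta(N/\Delta)$ vertices and lifetime $T=O(\Delta)$, via constant-size clause gadgets that are ``cost-free exactly under a satisfying assignment'' together with consistency enforcement across overlapping windows --- and you yourself flag this as ``the crux''. Without those gadgets the argument does not close: in \textsc{SW-TVC} every gadget edge appearing in a window must itself be temporally covered, so gadgets are never literally cost-free and careful budget accounting is needed; encoding a bit by the mere presence or absence of an appearance of the block-vertex $w_i$ makes the encoding cost depend on the Hamming weight of the assignment (the usual fix is a forced choice between two appearances per bit, which you do not provide); and keeping each block's pattern identical across the $\Theta(\Delta)$ overlapping windows containing a given slot, while staying ``inert'' in neighbouring windows, is exactly the difficult part. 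None of this is routine, and none of it is supplied, so the proposal is an incomplete proof of the theorem.

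Moreover, the heavy construction is unnecessary for this particular statement. The quantifier order is ``for all $f,g$ there exists $\varepsilon$'', so the reduction is free to fix $\Delta$ and $T$ to absolute constants, at which point $f(T)$ and $g(\Delta)$ are constants that can be absorbed into $\varepsilon$. This is precisely what the paper does: starting from a \textsc{Vertex Cover} instance $G$ on $n$ vertices --- which, by the textbook reduction from \textsc{3SAT} together with the Sparsification Lemma, admits no $O(2^{\varepsilon_0 n})$-time algorithm under ETH for some constant $\varepsilon_0<1$ --- it sets $T=\Delta=2$, $G_1=G$, and $G_2$ edgeless, so that the optima coincide; choosing $\varepsilon=\varepsilon_0/(2g(2))$ turns a hypothetical $f(T)\cdot 2^{\varepsilon n\cdot g(\Delta)}$-time algorithm into one running in time $f(2)\cdot 2^{(\varepsilon_0/2)n}=O(2^{\varepsilon_0 n})$ for \textsc{Vertex Cover}, a contradiction. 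Your ``Case 2'', instantiated with this trivial reduction, is essentially the paper's entire proof; your ``Case 1'' machinery with growing $\Delta$ and the guiding identity $n\Delta=\Theta(N)$ would only be required for a strictly stronger claim of the shape ``for all $\varepsilon>0$, no $f(T)\cdot 2^{\varepsilon n\Delta}$-time algorithm exists'', which is not what the theorem asserts.
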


\begin{proof}
To prove the theorem, we reduce \textsc{Vertex Cover} to a restricted version of \textsc{SW-TVC}. 
In our reduction, we construct an instance of \textsc{SW-TVC} which forces us to 
solve \textsc{Vertex Cover} on a particular static instance. 
Let $f:\mathbb{N}\rightarrow \mathbb{N}$ and $g:\mathbb{N}\rightarrow \mathbb{N}$ be two functions. 
Assume, for the sake of contradiction, that for every $\varepsilon<1$ 
there exists an algorithm~$\mathcal{A}_{\varepsilon}$ which solves \textsc{SW-TVC} 
in $f(T)\cdot 2^{\varepsilon n \cdot g(\Delta)}$ time. 
Now fix arbitrary numbers $T_0,\Delta_0\in\mathbb{N}$, where $\Delta_0 \leq T_0$. 
We reduce \textsc{Vertex Cover} to the restriction of \textsc{SW-TVC} where $T=T_0$ and $\Delta=\Delta_0$ 
in the input, as follows. 
Let $G$ be an instance graph of \textsc{Vertex Cover} with $n$ vertices. We construct the temporal graph $(G,\lambda)$ 
with snapshots $G_1, G_2, \ldots, G_{T_0}$, 
such that $G_i = G$ whenever $i =1\mod \Delta_0$, and $G_i$ is an independent set on $n$ vertices otherwise. 
Then, every optimum solution of \textsc{SW-TVC} on~$(G,\lambda)$ contains an optimum solution 
of \textsc {Vertex Cover} on $G$ at each time slot~$i$, where $i =1\mod \Delta_0$.

By our assumption, we solve \textsc{SW-TVC} on the instance $(G,\lambda)$ 
using algorithm~$\mathcal{A}_{\varepsilon}$, where $\varepsilon = \frac{\varepsilon_{0}}{g(\Delta_0)}$ 
and $\varepsilon_{0}$ is the constant of Theorem~\ref{Vetrex-Cover-lower-bound} for \textsc{Vertex Cover}. 
Note that $\varepsilon$ is a constant, since both~$\varepsilon_{0}$ and~$g(\Delta_0)$ are constants. 
Furthermore note that the result of the algorithm also provides a minimum vertex cover in the original (static) graph $G$. 
The running time of $\mathcal{A}_{\varepsilon}$ is 
by assumption~$f(T_0)\cdot 2^{\frac{\varepsilon_{0}}{g(\Delta_0)} n \cdot g(\Delta_0)} = f(T_0)\cdot 2^{\varepsilon_{0} n}$.
Therefore, since $f(T_0)$ is also a constant, 
the existence of the algorithm $\mathcal{A}_{\varepsilon}$ for \textsc{SW-TVC} implies an algorithm 
for \textsc{Vertex Cover} with running time $O(2^{\varepsilon_{0}n})$, which is a contradiction, assuming ETH, due to Theorem~\ref{Vetrex-Cover-lower-bound}.
\end{proof}

\subsection{An exact dynamic programming algorithm}
\label{subsec:dynam-prog}

The main idea of our dynamic programming algorithm for \textsc{SW-TVC} is to scan the temporal graph from left to right 
with respect to time (i.e.~to scan the snapshots $G_{i}$ increasingly on $i$), 
and at every time slot to consider all possibilities for the vertex appearances at the previous $\Delta$ time slots. 
Before we proceed with the presentation and analysis of our algorithm, 
we start with a simple but useful observation.

\begin{observation}
\label{obs:feasible-partial} Let $(G,\lambda )$ be a temporal graph with
lifetime $T$. Let $\mathcal{S}$ be a sliding $\Delta $-window temporal
vertex cover of $(G,\lambda )$. Then, for every $\Delta \leq t\leq T$, the
temporal vertex subset~$\mathcal{S}|_{[1,t]}=\{(v,i)\in \mathcal{S}:i\leq
t\} $ is a sliding $\Delta $-window temporal vertex cover of~$(G,\lambda )|_{[1,t]}$.
\end{observation}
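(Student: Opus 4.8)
The plan is to verify the covering condition of Definition~\ref{sliding-temporal-vertex-cover-def} directly for the restricted temporal graph $(G,\lambda)|_{[1,t]}$, reducing each of its requirements to the corresponding requirement that is already guaranteed for the full temporal graph $(G,\lambda)$. So the proof is essentially a matching-up of time windows under restriction.

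First I would observe that $(G,\lambda)|_{[1,t]}$ has lifetime $t$, so (since $t\geq \Delta$, which is exactly what makes the sliding $\Delta$-window notion well-defined) its time windows are precisely $W_s=[s,s+\Delta-1]$ for $s\in[1,t-\Delta+1]$. Because $t\leq T$, we have $t-\Delta+1\leq T-\Delta+1$, so these are a subset of the windows $W_1,\ldots,W_{T-\Delta+1}$ of $(G,\lambda)$. The crucial point is that every such window lies entirely within $[1,t]$, since its rightmost slot satisfies $s+\Delta-1\leq(t-\Delta+1)+\Delta-1=t$. Two consequences follow immediately. First, the snapshots $G_i$ with $i\in W_s$ are identical in $(G,\lambda)$ and in its restriction, so the edge set $E[W_s]=\bigcup_{i\in W_s}E_i$ to be covered is the same in both. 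Second, any vertex appearance $(w,r)$ with $r\in W_s$ automatically has $r\leq t$, and hence survives the restriction of $\mathcal{S}$ to $[1,t]$.

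With this in hand, I would fix an arbitrary window $W_s$ of the restricted graph together with an arbitrary edge $e\in E[W_s]$. Since $\mathcal{S}$ is a sliding $\Delta$-window temporal vertex cover of $(G,\lambda)$, and $W_s$ is also a window of $(G,\lambda)$ covering the very same edge set, there exists an appearance $(w,r)\in\mathcal{S}[W_s]$ that temporally covers $e$ (so $w$ is an endpoint of $e$ and $r\in\lambda(e)\cap W_s$). As $r\in W_s\subseteq[1,t]$, this appearance belongs to $\mathcal{S}|_{[1,t]}$, and therefore to $(\mathcal{S}|_{[1,t]})[W_s]$. Thus $e$ is temporally covered within the restricted cover. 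Since $W_s$ and $e$ were arbitrary, $\mathcal{S}|_{[1,t]}$ satisfies Definition~\ref{sliding-temporal-vertex-cover-def} for $(G,\lambda)|_{[1,t]}$, which is the desired conclusion.

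I do not expect any genuine obstacle here; the statement is almost immediate once the definitions are unfolded. The only point that requires a little care is the boundary bookkeeping: confirming that the windows of the restricted graph are exactly those $W_s$ with $s\leq t-\Delta+1$, that each such window is contained in $[1,t]$ (this is precisely where the hypotheses $t\geq\Delta$ and $t\leq T$ are used), and hence that passing to $\mathcal{S}|_{[1,t]}$ neither introduces any new edge that must be covered nor discards any appearance that was used to cover an existing one.
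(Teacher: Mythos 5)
Your proof is correct: the paper states this as an Observation without any written proof, treating it as immediate from Definition~\ref{sliding-temporal-vertex-cover-def}, and your argument---that every window $W_s$ with $s\leq t-\Delta+1$ lies entirely within $[1,t]$, so its edge set $E[W_s]$ is unchanged and any covering appearance $(w,r)$ with $r\in W_s$ survives the restriction---is precisely the reasoning the paper leaves implicit. Your boundary bookkeeping (using $t\geq\Delta$ and $t\leq T$) is exactly the right level of care, and nothing more is needed.
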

\begin{proof}[Proof sketch.]
	By definition, $\mathcal{S}$ is such that for \emph{every} time window $W_j,~j=1,\ldots, T-\Delta+1$, and every $e \in E[W_j]$, $e$ is temporally covered by some vertex appearance in $\mathcal{S}[W_j]$. 
	Restrict $(G,\lambda )$ and the vertex appearances of $\mathcal{S}$ to the time slots $1,\ldots ,t$. For every time window $W_j,~j=1,\ldots, t-\Delta+1$, and every $e \in E[W_j]$, $e$ is still temporally covered by some vertex appearance in $\mathcal{S}|_{[1,t]}[W_j]$, since the time windows $W_j,~j=1,\ldots, t-\Delta+1$, are the same in both $(G,\lambda )$ and $(G,\lambda )|_{[1,t]}$, hence we only need vertex appearances in $\mathcal{S}|_{[1,t]}$ to temporally cover edges in $(G,\lambda )|_{[1,t]}$.
\end{proof}

Let $(G,\lambda )$ be a temporal graph with $n$ vertices and lifetime $T$,
and let $\Delta \leq T$. For every $t=1,2,\ldots, T-\Delta +1$ and every $%
\Delta $-tuple of vertex subsets $A_{1},\ldots A_{\Delta }$ of $G$, we
define $f(t;A_{1},A_{2},\ldots ,A_{\Delta })$  
to be the smallest cardinality of a sliding $\Delta $-window temporal vertex
cover $\mathcal{S}$ of~$(G,\lambda )|_{[1,t+\Delta -1]}$, such that $%
\mathcal{S}_{t}=(A_{1},t),\ \mathcal{S}_{t+1}=(A_{2},t+1),\ \ldots ,\ 
\mathcal{S}_{t+\Delta -1}=(A_{\Delta },t+\Delta -1)$. If there exists no
sliding $\Delta $-window temporal vertex cover~$\mathcal{S}$ of~$(G,\lambda
)|_{[1,t+\Delta -1]}$ with these prescribed vertex appearances in the time
slots $t,t+1,\ldots ,t+\Delta -1$, then we define $f(t;A_{1},A_{2},\ldots
,A_{\Delta })=\infty $. Note that, once we have computed all possible values
of the function~$f(\cdot )$, then the optimum
solution of \textsc{SW-TVC} on $(G,\lambda )$ has cardinality%
\begin{equation}
\text{OPT}_{\textsc{SW-TVC}}(G,\lambda) = 
\min_{A_{1},A_{2},\ldots,A_{\Delta }\subseteq V}
\left\{f(T-\Delta +1;A_{1},A_{2},\ldots ,A_{\Delta})\right\} .  \label{optimum-dynamic-eq}
\end{equation}

\begin{observation}
\label{obs:infeasibility-last window-dynamic}
If the temporal vertex set $\bigcup\nolimits_{i=1}^{\Delta }(A_{i},t+i-1)$ is not a temporal vertex
cover of $(G,\lambda )|_{[t,t+\Delta -1]}$ then $f(t;A_{1},A_{2},\ldots,A_{\Delta })=\infty $.
\end{observation}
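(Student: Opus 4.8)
The plan is to derive the statement essentially as a bookkeeping consequence of the two relevant definitions, arguing by contraposition. The first thing I would pin down is that the $\Delta$ time slots $t, t+1, \ldots, t+\Delta-1$ carrying the prescribed appearances $(A_1,t), \ldots, (A_\Delta,t+\Delta-1)$ are exactly the slots of the rightmost window $W_t = [t, t+\Delta-1]$ of the truncated temporal graph $(G,\lambda)|_{[1,t+\Delta-1]}$. This is the only window of the truncation that reaches the slot $t+\Delta-1$, and it is the window on which the constraints of $f(\cdot)$ directly act.

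Next I would suppose, towards a contradiction, that $f(t;A_1,\ldots,A_\Delta) < \infty$. By the definition of $f$ this yields a sliding $\Delta$-window temporal vertex cover $\mathcal{S}$ of $(G,\lambda)|_{[1,t+\Delta-1]}$ whose appearances in these $\Delta$ slots are precisely $\mathcal{S}_{t+i-1} = (A_i, t+i-1)$ for each $i \in [1,\Delta]$. Since no appearance of $\mathcal{S}$ at a slot outside $[t,t+\Delta-1]$ lies in $W_t$, the restriction of $\mathcal{S}$ to this window satisfies $\mathcal{S}[W_t] = \bigcup_{i=1}^{\Delta}(A_i, t+i-1)$, i.e.\ the prescribed appearances are the only material $\mathcal{S}$ has available to cover the edges of $W_t$.

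I would then invoke the defining condition of a sliding $\Delta$-window temporal vertex cover (Definition~\ref{sliding-temporal-vertex-cover-def}) applied to the window $W_t$: every edge $e \in E[W_t]$ must be temporally covered by some vertex appearance in $\mathcal{S}[W_t]$. Substituting $\mathcal{S}[W_t] = \bigcup_{i=1}^{\Delta}(A_i, t+i-1)$, this says exactly that $\bigcup_{i=1}^{\Delta}(A_i, t+i-1)$ temporally covers every edge active in $[t,t+\Delta-1]$, which is precisely the assertion that it is a temporal vertex cover of $(G,\lambda)|_{[t,t+\Delta-1]}$ in the sense of Definition~\ref{temporal-vertex-cover-def}. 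That contradicts the hypothesis of the observation, so no such $\mathcal{S}$ can exist and $f(t;A_1,\ldots,A_\Delta) = \infty$.

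I do not expect a genuine obstacle here, as the claim is an immediate unfolding of the definitions. The only point requiring a little care is the identification step: one must verify that the ``window $W_t$'' covering requirement embedded in the sliding-window definition coincides with the standalone notion of a temporal vertex cover of the restricted temporal graph $(G,\lambda)|_{[t,t+\Delta-1]}$, and confirm that $\mathcal{S}$ truly cannot use any appearance outside the $\Delta$ prescribed slots to satisfy that window.
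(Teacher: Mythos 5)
Your proof is correct. The paper states this as an observation without any proof, treating it as immediate from the definitions of $f$ and of a sliding $\Delta$-window temporal vertex cover; your contraposition argument --- identifying $\mathcal{S}[W_t]=\bigcup_{i=1}^{\Delta}(A_i,t+i-1)$ for the last window $W_t$ of the truncated graph $(G,\lambda)|_{[1,t+\Delta-1]}$ and applying the covering requirement for that window --- is precisely the intended unfolding of those definitions.
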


Due to Observation~\ref{obs:infeasibility-last window-dynamic} we assume 
below without loss of generality that $\bigcup\nolimits_{i=1}^{\Delta}(A_{i},t+i-1)$ 
is a temporal vertex cover of $(G,\lambda)|_{[t,t+\Delta -1]}$. 
We are now ready to present our main recursion formula in the next lemma.

\begin{lemma}
\label{recursion-dynamic-lem}Let $(G,\lambda )$ be a temporal graph, where $%
G=(V,E)$. Let $2\leq t\leq T-\Delta +1$ and let $A_{1},A_{2},\ldots
A_{\Delta }$ be a $\Delta $-tuple of vertex subsets of the underlying graph $%
G$. Suppose that $\bigcup\nolimits_{i=1}^{\Delta }(A_{i},t+i-1)$ is a
temporal vertex cover of $(G,\lambda )|_{[t,t+\Delta -1]}$. Then%
\begin{equation}
f(t;A_{1},A_{2},\ldots ,A_{\Delta })=|A_{\Delta }|+\min_{X\subseteq
V}\left\{ f(t-1;X,A_{1},\ldots ,A_{\Delta -1})\right\} .
\label{recursion-eq}
\end{equation}
\end{lemma}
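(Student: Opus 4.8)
The plan is to establish the recursion~\eqref{recursion-eq} by proving the two matching inequalities ``$\leq$'' and ``$\geq$'' between its sides, with the value $\infty$ treated in the obvious way. The structural fact I would rely on is that the set of time windows contained in $(G,\lambda)|_{[1,t+\Delta-1]}$ is exactly the set of windows of $(G,\lambda)|_{[1,t+\Delta-2]}$ together with the single extra window $W_t=[t,t+\Delta-1]$; this holds because $(G,\lambda)|_{[1,t']}$ contains precisely the windows $W_1,\ldots,W_{t'-\Delta+1}$. Hence extending a cover from the prefix of length $t+\Delta-2$ to the prefix of length $t+\Delta-1$ imposes exactly one new covering requirement, namely that $W_t$ be covered, and this is guaranteed by the standing hypothesis that $\bigcup_{i=1}^{\Delta}(A_i,t+i-1)$ is a temporal vertex cover of $(G,\lambda)|_{[t,t+\Delta-1]}$.

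For the inequality ``$\leq$'', I would fix a subset $X\subseteq V$ attaining the minimum on the right-hand side (if every term there equals $\infty$, the inequality is trivial). By definition of $f$ there is a sliding $\Delta$-window temporal vertex cover $\mathcal{S}'$ of $(G,\lambda)|_{[1,t+\Delta-2]}$ of cardinality $f(t-1;X,A_1,\ldots,A_{\Delta-1})$ whose appearances in slots $t-1,t,\ldots,t+\Delta-2$ are exactly $X,A_1,\ldots,A_{\Delta-1}$. Setting $\mathcal{S}=\mathcal{S}'\cup(A_\Delta,t+\Delta-1)$, I would verify that $\mathcal{S}$ is a sliding $\Delta$-window temporal vertex cover of $(G,\lambda)|_{[1,t+\Delta-1]}$ with the prescribed appearances in slots $t,\ldots,t+\Delta-1$: the windows $W_1,\ldots,W_{t-1}$ stay covered since $\mathcal{S}$ and $\mathcal{S}'$ agree on slots $1,\ldots,t+\Delta-2$, and the only new window $W_t$ is covered by hypothesis. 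As $(A_\Delta,t+\Delta-1)$ is disjoint from $\mathcal{S}'$, we get $|\mathcal{S}|=f(t-1;X,A_1,\ldots,A_{\Delta-1})+|A_\Delta|$, which yields the claimed inequality.

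For the reverse inequality ``$\geq$'', I would take a minimum-cardinality sliding $\Delta$-window temporal vertex cover $\mathcal{S}$ realizing $f(t;A_1,\ldots,A_\Delta)$ (if this value is $\infty$ there is nothing to prove). Writing $X:=\mathcal{S}_{t-1}$ for its appearances at slot $t-1$, I would apply Observation~\ref{obs:feasible-partial} to conclude that $\mathcal{S}|_{[1,t+\Delta-2]}$ is a sliding $\Delta$-window temporal vertex cover of $(G,\lambda)|_{[1,t+\Delta-2]}$; by construction its appearances in slots $t-1,t,\ldots,t+\Delta-2$ are exactly $X,A_1,\ldots,A_{\Delta-1}$, so its cardinality is at least $f(t-1;X,A_1,\ldots,A_{\Delta-1})$. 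Since the appearances at slot $t+\Delta-1$ are exactly $(A_\Delta,t+\Delta-1)$ and are disjoint from all earlier slots, I obtain $|\mathcal{S}|=|\mathcal{S}|_{[1,t+\Delta-2]}|+|A_\Delta|\geq\min_{X\subseteq V}\{f(t-1;X,A_1,\ldots,A_{\Delta-1})\}+|A_\Delta|$, as required.

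Each step is short, and I do not expect any hard calculation; the point needing care is the precise bookkeeping of which windows must be covered at each prefix length and checking that the lone new window is exactly $W_t$, the one handled by the hypothesis. In particular, I would make sure Observation~\ref{obs:feasible-partial} is invoked at the correct index $t+\Delta-2$, which requires $t+\Delta-2\geq\Delta$, i.e.\ $t\geq 2$ --- precisely the range assumed in the lemma --- so that the restriction is genuinely a valid sliding-window cover of the shorter temporal graph.
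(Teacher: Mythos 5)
Your proposal is correct and follows essentially the same route as the paper's proof: the ``$\leq$'' direction extends a minimum cover of $(G,\lambda)|_{[1,t+\Delta-2]}$ by $(A_{\Delta},t+\Delta-1)$, using the hypothesis to cover the single new window $W_t$, and the ``$\geq$'' direction restricts an optimal cover via Observation~\ref{obs:feasible-partial} with $X$ taken as the vertices active at slot $t-1$. The only (cosmetic) difference is that the paper treats the case $\min_{X\subseteq V}\{f(t-1;X,A_1,\ldots,A_{\Delta-1})\}=\infty$ as a separate preliminary case, whereas you fold it into the two-inequality bookkeeping --- which works equally well, and your explicit check that $t+\Delta-2\geq\Delta$ justifies the restriction is a point the paper leaves implicit.
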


\begin{proof}
First consider the case where $\min_{X\subseteq V}\left\{ f(t-1;X,A_{1},\ldots ,A_{\Delta
-1})\right\} =\infty $. Assume that $f(t;A_{1},A_{2},\ldots ,A_{\Delta
})\neq \infty $ and let $\mathcal{S}$ be a sliding $\Delta $-window temporal
vertex cover of the instance $(G,\lambda )|_{[1,t+\Delta -1]}$, in which the
vertex appearances in the the last $\Delta $ time slots $t,t+1,\ldots
,t+\Delta -1$ are given by $\mathcal{S}_{t}=(A_{1},t),\ \mathcal{S}%
_{t+1}=(A_{2},t+1),\ \ldots ,\ \mathcal{S}_{t+\Delta -1}=(A_{\Delta
},t+\Delta -1)$. Then, by Observation~\ref{obs:feasible-partial}, $\mathcal{S%
}|_{[1,t+\Delta -2]}$ is a sliding $\Delta $-window temporal vertex cover of
the instance $(G,\lambda )|_{[1,t+\Delta -2]}$. Moreover, the vertex
appearances of $\mathcal{S}|_{[1,t+\Delta -2]}$ in the the last $\Delta -1$
time slots $t,t+1,\ldots ,t+\Delta -2$ are given by $\mathcal{S}%
_{t}=(A_{1},t),\ \mathcal{S}_{t+1}=(A_{2},t+1),\ \ldots ,\ \mathcal{S}%
_{t+\Delta -2}=(A_{\Delta -1},t+\Delta -2)$. 
Now let $X$ be the set of vertices of $G$ which are active in $\mathcal{S}|_{[1,t+\Delta -2]}$ 
during the time slot $t-1$. 
Then note that $f(t-1;X,A_{1},\ldots ,A_{\Delta-1})$ is upper-bounded by the cardinality of  $\mathcal{S}|_{[1,t+\Delta -2]}$, 
and thus $f(t-1;X,A_{1},\ldots,A_{\Delta -1})\neq \infty $, which is a contradiction. 
That is, if $\min_{X\subseteq V}\left\{ f(t-1;X,A_{1},\ldots ,A_{\Delta -1})\right\}
=\infty $ then also $f(t;A_{1},A_{2},\ldots ,A_{\Delta })=\infty $, and in
this case the value of $f(t;A_{1},A_{2},\ldots ,A_{\Delta })$ is correctly
computed by (\ref{recursion-eq}).

Now consider the case where $\min_{X\subseteq V}\left\{ f(t-1;X,A_{1},\ldots ,A_{\Delta
-1})\right\} \neq \infty $, and let $X\subseteq V$ be a vertex subset for
which $f(t-1;X,A_{1},\ldots ,A_{\Delta -1})$ is minimized. Furthermore let $%
\mathcal{S}$ be a minimum-cardinality sliding $\Delta $-window temporal
vertex cover of the instance $(G,\lambda )|_{[1,t+\Delta -2]}$, in which the
vertex appearances in the the last $\Delta $ time slots $t-1,t,\ldots
,t+\Delta -2$ are given by $\mathcal{S}_{t-1}=(X,t-1),\ \mathcal{S}%
_{t}=(A_{1},t),\ \ldots ,\ \mathcal{S}_{t+\Delta -2}=(A_{\Delta -1},t+\Delta
-2)$. Then $\mathcal{S}\cup (A_{\Delta },t+\Delta -1)$ is a sliding $\Delta $%
-window temporal vertex cover of the instance $(G,\lambda )|_{[1,t+\Delta
-1]}$, since $\bigcup\nolimits_{i=1}^{\Delta }(A_{i},t+i-1)$ is a temporal
vertex cover of $(G,\lambda )|_{[t,t+\Delta -1]}$ by the assumption of the
lemma. Thus%
\begin{eqnarray}
f(t;A_{1},A_{2},\ldots ,A_{\Delta }) &\leq &|\mathcal{S}\cup (A_{\Delta
},t+\Delta -1)| \notag \\
&=&|A_{\Delta }|+|\mathcal{S}|  \notag \\
&=&|A_{\Delta }|+\min_{X\subseteq V}\left\{ f(t-1;X,A_{1},\ldots ,A_{\Delta
-1})\right\} ,  \label{recursion-proof-eq-1}
\end{eqnarray}%
and thus, in particular, $f(t;A_{1},A_{2},\ldots ,A_{\Delta })\neq \infty $.
Now let $\mathcal{S}'$ be a minimum-cardinality sliding $\Delta$-window temporal vertex cover of the instance $(G,\lambda )|_{[1,t+\Delta-1]}$, in which the vertex appearances in the the last $\Delta $ time slots $t,t+1,\ldots ,t+\Delta -1$ are given 
by $\mathcal{S}'_{t}=(A_{1},t),\ \mathcal{S}'_{t+1}=(A_{2},t+1),\ \ldots ,\ \mathcal{S}'_{t+\Delta
-1}=(A_{\Delta },t+\Delta -1)$. Note that $|\mathcal{S}^{\prime }|=$ $%
f(t;A_{1},A_{2},\ldots ,A_{\Delta })$, since $\mathcal{S}^{\prime }$ has
minimum cardinality by assumption. Observation~\ref{obs:feasible-partial}
implies that the temporal vertex subset $\mathcal{S}^{\prime \prime }=%
\mathcal{S}^{\prime }|_{[1,t+\Delta -2]}$ is a sliding $\Delta $-window
temporal vertex cover\ of the instance $(G,\lambda )|_{[1,t+\Delta -2]}$,
and thus $|\mathcal{S}^{\prime \prime }|\geq \min_{X\subseteq V}\left\{
f(t-1;X,A_{1},\ldots ,A_{\Delta -1})\right\} $. Furthermore, since $|%
\mathcal{S}^{\prime }|=|A_{\Delta }|+|\mathcal{S}^{\prime \prime }|$ by
construction, it follows that%
\begin{equation}
f(t;A_{1},A_{2},\ldots ,A_{\Delta })=|A_{\Delta }|+|\mathcal{S}^{\prime
\prime }|\geq |A_{\Delta }|+\min_{X\subseteq V}\left\{ f(t-1;X,A_{1},\ldots
,A_{\Delta -1})\right\} .  \label{recursion-proof-eq-2}
\end{equation}%
Summarizing, equations (\ref{recursion-proof-eq-1}) and (\ref{recursion-proof-eq-2})\
imply that $f(t;A_{1},A_{2},\ldots ,A_{\Delta })=|A_{\Delta
}|+\min_{X\subseteq V}\left\{ f(t-1;X,A_{1},\ldots ,A_{\Delta -1})\right\} $%
, whenever $\min_{X\subseteq V}\left\{ f(t-1;X,A_{1},\ldots ,A_{\Delta
-1})\right\} \neq \infty $. This completes the proof of the lemma.
\end{proof}

Using the recursive computation of Lemma~\ref{recursion-dynamic-lem}, we are
now ready to present Algorithm~\ref{alg:dynam} for computing the value of an
optimal solution of \textsc{SW-TVC} on a given arbitrary temporal graph $%
(G,\lambda )$. Note that Algorithm~\ref{alg:dynam} can be easily modified 
such that it also computes the actual optimum solution of 
\textsc{SW-TVC} (instead of only its optimum cardinality). The proof of
correctness and running time analysis of Algorithm~\ref{alg:dynam} are given
in the next theorem.

\begin{algorithm}[h]
\caption{\textsc{SW-TVC}}  
\label{alg:dynam}  
\begin{algorithmic}[1]
\REQUIRE{A temporal graph $(G, \lambda)$ with lifetime $T$, where $G=(V,E)$, 
	and a natural $\Delta \leq T$.}
\ENSURE{The smallest cardinality of a \SWTVC\ in $(G, \lambda)$.}

\medskip

\FOR{$t=1$ to $T-\Delta+1$} \label{alg-synamic-line-3}
	\FOR{all $A_{1},A_{2},\ldots,A_{\Delta}\subseteq V$} \label{alg-synamic-line-4}
	\vspace{0,1cm}
		\IF{$\bigcup_{i=1}^{\Delta}(A_{i},t+i-1)$ is a temporal vertex cover of $(G,\lambda)|_{[t,t+\Delta-1]}$} \label{alg-synamic-line-5}
			\IF{$t=1$} \label{alg-synamic-line-5a}
				\STATE{$f(t;A_{1},A_{2},\ldots,A_{\Delta}) \leftarrow \sum_{i=1}^{\Delta}|A_{i}|$} \label{alg-synamic-line-5b}
			\ELSE
				\STATE{$f(t;A_{1},A_{2},\ldots,A_{\Delta}) \leftarrow |A_{\Delta}|+ min_{X\subseteq V}\left\{ f(t-1;X,A_{1},\ldots ,A_{\Delta -1})\right\}$} \label{alg-synamic-line-6}
			\ENDIF
		\ELSE \label{alg-synamic-line-7}
			\STATE{$f(t;A_{1},A_{2},\ldots,A_{\Delta}) \leftarrow \infty$} \label{alg-synamic-line-8}
		\ENDIF
	\ENDFOR
\ENDFOR

\medskip

\RETURN{$min_{A_{1},\ldots ,A_{\Delta}\subseteq V} \left\{ f(T-\Delta+1;A_{1},\ldots ,A_{\Delta})\right\}$} \label{alg-synamic-line-9}
\end{algorithmic}
\end{algorithm}

\begin{theorem}
\label{dynamic-algorithm-thm}Let $(G,\lambda )$ be a temporal graph, where $%
G=(V,E)$ has $n$ vertices and $m$ edges. Let $T$ be its lifetime 
and let $\Delta $ be the length of the sliding window.
Algorithm~\ref{alg:dynam} computes in~$O(T\Delta (n+m)\cdot 2^{n(\Delta+1) })$
time the value of an optimal solution of \textsc{SW-TVC} on $(G,\lambda )$.
\end{theorem}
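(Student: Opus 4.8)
The plan is to prove two things: that Algorithm~\ref{alg:dynam} correctly computes the optimum value, and that it runs within the claimed time bound. For \emph{correctness}, I would argue by induction on $t$ that every table entry $f(t;A_1,\ldots,A_\Delta)$ is computed consistently with its definition. The base case $t=1$ is handled in line~\ref{alg-synamic-line-5b}: here the window $W_1=[1,\Delta]$ is the only constraint, so whenever $\bigcup_{i=1}^{\Delta}(A_i,i)$ is a temporal vertex cover of $(G,\lambda)|_{[1,\Delta]}$ the smallest feasible cover with these prescribed appearances is exactly $\sum_{i=1}^{\Delta}|A_i|$, and otherwise the value is $\infty$ by Observation~\ref{obs:infeasibility-last window-dynamic}. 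For the inductive step $2\le t\le T-\Delta+1$, I would invoke Lemma~\ref{recursion-dynamic-lem}, which (under the feasibility assumption on the current window, checked in line~\ref{alg-synamic-line-5}) establishes precisely the recurrence used in line~\ref{alg-synamic-line-6}; the infeasible case is set to $\infty$ in line~\ref{alg-synamic-line-8}, matching Observation~\ref{obs:infeasibility-last window-dynamic}. Since line~\ref{alg-synamic-line-6} only references entries indexed by $t-1$, which are correct by the induction hypothesis, every entry at time~$t$ is correct. Finally, the returned value in line~\ref{alg-synamic-line-9} equals the optimum by equation~(\ref{optimum-dynamic-eq}).

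For the \emph{running time}, I would count the work of the nested loops. The outer loop (line~\ref{alg-synamic-line-3}) runs $T-\Delta+1=O(T)$ times. The inner loop (line~\ref{alg-synamic-line-4}) ranges over all $\Delta$-tuples $(A_1,\ldots,A_\Delta)$ of subsets of $V$; since each $A_i$ is one of $2^n$ subsets, there are $(2^n)^\Delta=2^{n\Delta}$ such tuples. For each tuple I must account for two costs: the feasibility test in line~\ref{alg-synamic-line-5}, and the minimization in line~\ref{alg-synamic-line-6}. Checking whether $\bigcup_{i=1}^{\Delta}(A_i,t+i-1)$ temporally covers $(G,\lambda)|_{[t,t+\Delta-1]}$ amounts to verifying, for each of the $\Delta$ snapshots, that every active edge has a chosen endpoint, which costs $O(\Delta(n+m))$ time. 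The minimization over $X\subseteq V$ in line~\ref{alg-synamic-line-6} ranges over $2^n$ candidate sets, contributing a factor of $2^n$; this is the step that multiplies the tuple count $2^{n\Delta}$ up to $2^{n(\Delta+1)}$.

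Multiplying these factors together yields $O(T)\cdot 2^{n\Delta}\cdot\bigl(O(\Delta(n+m))+2^n\bigr)=O\!\left(T\Delta(n+m)\cdot 2^{n(\Delta+1)}\right)$, which is the claimed bound. I expect the main subtlety to lie not in any single calculation but in presenting the running-time accounting cleanly, in particular in making explicit that the $2^n$ choices for $X$ in the inner minimization are what upgrade the exponent from $n\Delta$ to $n(\Delta+1)$, and in confirming that the per-tuple feasibility check is dominated by this $2^n$ factor so that it contributes only to the polynomial part $\Delta(n+m)$. A secondary point worth stating carefully is that the correctness argument relies on the feasibility precondition of Lemma~\ref{recursion-dynamic-lem} being exactly the condition tested in line~\ref{alg-synamic-line-5}, so that the recursion is only applied when it is valid, with the complementary case disposed of by Observation~\ref{obs:infeasibility-last window-dynamic}.
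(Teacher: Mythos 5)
Your proposal is correct and follows essentially the same route as the paper's proof: correctness via Observation~\ref{obs:infeasibility-last window-dynamic} for infeasible tuples, the base case $f(1;A_1,\ldots,A_\Delta)=\sum_{i=1}^{\Delta}|A_i|$, Lemma~\ref{recursion-dynamic-lem} for the recursion in line~\ref{alg-synamic-line-6}, and equation~(\ref{optimum-dynamic-eq}) for the returned value, together with the identical running-time accounting of $O(T)$ outer iterations times $2^{n\Delta}$ tuples times the per-tuple cost $O(\Delta(n+m))+O(2^n)$. The only cosmetic differences are that you phrase correctness as an explicit induction on $t$ (the paper leaves this implicit) and you omit the $O(2^{n\Delta})$ cost of the final minimization in line~\ref{alg-synamic-line-9}, which is dominated by the main loop in any case.
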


\begin{proof}
In its main part (lines~\ref{alg-synamic-line-3}-\ref{alg-synamic-line-8}),
the algorithm iterates over all time slots $1\leq t\leq T-\Delta +1$ and
over all vertex subsets $A_{1},A_{2},\ldots A_{\Delta }\subseteq V$.
Whenever it detects a tuple $(t;A_{1},A_{2},\ldots A_{\Delta })$ such that $%
\bigcup\nolimits_{i=1}^{\Delta }(A_{i},t+i-1)$ is not a temporal vertex
cover of $(G,\lambda )|_{[t,t+\Delta -1]}$, then it sets $%
f(t;A_{1},A_{2},\ldots A_{\Delta })=\infty $ in line~\ref{alg-synamic-line-8}. 
This is correct by Observation~\ref{obs:infeasibility-last window-dynamic}.

For all other tuples $(t;A_{1},A_{2},\ldots A_{\Delta })$, the algorithm
distinguishes in lines~\ref{alg-synamic-line-5a}-\ref{alg-synamic-line-6} the
cases $t=1$ and $t\geq 2$. If $t\geq 2$ the algorithm recursively computes
in line~\ref{alg-synamic-line-6} the value of $f(t;A_{1},A_{2},\ldots
A_{\Delta })$ using values that have been previously computed. The correctness of this recursive computation follows by Lemma~\ref{recursion-dynamic-lem}. 
If $t=1$, then clearly the optimum solution of \textsc{SW-TVC} on $(G,\lambda
)|_{[1,\Delta ]}$ has value equal to the total number of vertex appearances
in the time slots $1,2,\ldots ,\Delta $, i.e.~$f(1;A_{1},A_{2},\ldots
A_{\Delta })=\sum_{i=1}^{\Delta }|A_{i}|$, as it is computed in line~\ref{alg-synamic-line-5b}. 
Finally, the algorithm correctly returns in line~\ref{alg-synamic-line-9} 
the smallest value of $f(T-\Delta +1;A_{1},A_{2},\ldots A_{\Delta })$
among all possible $\Delta $-tuples $A_{1},A_{2},\ldots A_{\Delta }$. The
correctness of this step has been discussed above, in equation~(\ref{optimum-dynamic-eq}).

With respect to running time, Algorithm~\ref{alg:dynam} iterates for each
value $t=1,2,\ldots, T$ and for each of the $2^{n\Delta }$ different $\Delta$-tuples 
$A_{1},A_{2},\ldots A_{\Delta }\subseteq V$ in lines~\ref{alg-synamic-line-3}-\ref{alg-synamic-line-8}. 
The only non-trivial computations within these lines are in lines~\ref{alg-synamic-line-5} 
and~\ref{alg-synamic-line-6}. 
In line~\ref{alg-synamic-line-5} the algorithm checks whether $\bigcup\nolimits_{i=1}^{\Delta }(A_{i},t+i-1)$
is a temporal vertex cover of $(G,\lambda )|_{[t,t+\Delta -1]}$. This can be
done in~$O(\Delta (n+m))$ time, where $m$ is the number of edges in the
(static) underlying graph $G$, by simply enumerating all edges that are covered by 
the vertex appearances in $\bigcup\nolimits_{i=1}^{\Delta }(A_{i},t+i-1)$ 
and comparing their number with the total number of edges that are active at least once in 
the time window $W_t=[t,t+\Delta -1]$. 
On the other hand, to execute line~\ref{alg-synamic-line-6} we need at most~$O(2^n)$ time for computing the minimum among at most $2^n$ different known values. 
Similarly, to execute line~\ref{alg-synamic-line-9} we need at most $O(2^{n\Delta})$ time for computing the minimum among at most~$2^{n\Delta}$ different known values. 
Therefore the total running time of Algorithm~\ref{alg:dynam} 
is upper-bounded by $O(T\Delta (n+m)\cdot 2^{n(\Delta+1) })$ time.
\end{proof}

\subsection{Always bounded vertex cover temporal graphs}
\label{sec:boundedVCnum}

Let $(G,\lambda)$ be a temporal graph of lifetime $T$, and let $\mathcal{S}$ be a minimum-cardinality \SWTVC\ of $(G,\lambda)$. 
Note that the minimality of $|\mathcal{S}|$ implies that, for every $i=1,2,\ldots,T$, 
$|\mathcal{S}_i|$ is upper-bounded by the (static) vertex cover number of $G_i$. 
Therefore, in the recursive relation of Lemma~\ref{recursion-dynamic-lem}, 
it is enough to only consider subsets $X, A_1, A_2, \ldots, A_{\Delta}\subseteq V$ 
which have cardinality upper bounded by the vertex cover number in the corresponding snapshot. 
Thus, whenever $\Delta$ is constant, Algorithm~\ref{alg:dynam} can be modified to become 
a polynomial-time algorithm for the class of always bounded vertex cover temporal graphs. 
Formally, let $k$ be a constant and let $\mathcal{C}_k$ be the class of graphs having vertex cover number at most $k$. The next theorem follows now from the analysis of Theorem~\ref{dynamic-algorithm-thm}.

\begin{theorem}
\label{bounded-vc-number-thm}
\textsc{SW-TVC} on always $\mathcal{C}_k$ temporal graphs can be solved
in $O(T\Delta (n+m)\cdot n^{k(\Delta+1)})$ time.
\end{theorem}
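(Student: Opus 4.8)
The plan is to adapt the proof of Theorem~\ref{dynamic-algorithm-thm} almost verbatim, changing only the range over which the subsets $A_1, \ldots, A_\Delta$ (and the auxiliary set $X$) are allowed to vary. The crucial structural observation is already supplied in the paragraph preceding the theorem: in a \emph{minimum-cardinality} sliding $\Delta$-window temporal vertex cover $\mathcal{S}$, for every time slot $i$ the set $\mathcal{S}_i$ of vertices appearing at slot $i$ is itself a vertex cover of the snapshot $G_i$ (otherwise some uncovered edge of $G_i$ would force an extra appearance elsewhere, contradicting minimality). Hence when $(G,\lambda)$ is an always $\mathcal{C}_k$ temporal graph, we have $|\mathcal{S}_i| \leq k$ for every $i$, and so it suffices to restrict the dynamic program to tuples of subsets each of size at most $k$.

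First I would state the restricted recursion. The function $f(t; A_1, \ldots, A_\Delta)$ and its defining recursion~\eqref{recursion-eq} are unchanged, but we now only compute entries for which each $A_j$ is a vertex cover of the corresponding snapshot $G_{t+j-1}$, and in line~\ref{alg-synamic-line-6} we only minimize over sets $X \subseteq V$ that are vertex covers of $G_{t-1}$. Since every such relevant set has cardinality at most $k$, the number of candidate sets per slot is at most $\sum_{j=0}^{k}\binom{n}{j} = O(n^k)$, rather than $2^n$. I would note that correctness is immediate: Lemma~\ref{recursion-dynamic-lem} is a statement about \emph{all} tuples, so its restriction to this sub-collection remains valid, and by the minimality argument no optimal solution is ever discarded because every $\mathcal{S}_i$ in an optimal solution lies within the restricted family.

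Next I would redo the running-time count exactly as in Theorem~\ref{dynamic-algorithm-thm}, substituting the new bounds. The outer loops run over $t = 1, \ldots, T-\Delta+1$ and over all $\Delta$-tuples from the restricted family, of which there are $O(n^{k})^{\Delta} = O(n^{k\Delta})$. The feasibility check in line~\ref{alg-synamic-line-5} still costs $O(\Delta(n+m))$, and the inner minimization in line~\ref{alg-synamic-line-6} now ranges over only $O(n^{k})$ candidate sets $X$. Multiplying these gives $O(T \cdot n^{k\Delta} \cdot (\Delta(n+m) + n^{k})) = O(T\Delta(n+m)\cdot n^{k(\Delta+1)})$, matching the claimed bound (the final $n^k$ factor absorbs the inner minimization, contributing the ``$+1$'' in the exponent, just as the $2^n$ factor did in the original theorem).

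I do not expect a genuine obstacle here, since the theorem is explicitly advertised as following ``from the analysis of Theorem~\ref{dynamic-algorithm-thm}.'' The one point requiring care, and the only place where a reader might want a complete argument, is the justification that restricting $X$ and the $A_j$ to bounded-size vertex covers does not lose any optimal solution; I would make this rigorous by invoking the minimality observation together with the fact that the recursion only ever reconstructs optimal partial solutions, whose slot-wise restrictions are themselves vertex covers of the respective snapshots of size at most $k$. A minor secondary subtlety is enumerating the bounded-size subsets efficiently (one should generate exactly the subsets of size $\le k$, which is doable in $O(n^k)$ total without touching all $2^n$ subsets), but this is routine and does not affect the stated asymptotics.
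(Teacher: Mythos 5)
Your overall plan---restricting the dynamic program of Theorem~\ref{dynamic-algorithm-thm} to small sets and recounting---is exactly the paper's, and your running-time arithmetic is fine, but your structural justification contains a genuine error which, as you have stated the restricted recursion, would break correctness. It is \emph{not} true that in a minimum-cardinality sliding $\Delta$-window temporal vertex cover $\mathcal{S}$ each $\mathcal{S}_i$ is a vertex cover of the snapshot $G_i$: for $\Delta\geq 2$ an edge of $G_i$ may legitimately be temporally covered at a \emph{different} slot of every window containing $i$, so no appearance at slot $i$ is needed at all. Concretely, take $G$ a single edge $uv$ with $\lambda(uv)=\{1,2\}$ and $T=\Delta=2$: the optimum is the single appearance $(u,2)$, and $\mathcal{S}_1=\emptyset$ is not a vertex cover of $G_1$. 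Your restricted DP, which only admits tuples in which each $A_j$ is a vertex cover of $G_{t+j-1}$ (and minimizes only over $X$ that are vertex covers of $G_{t-1}$), discards this optimum and returns $2$ instead of $1$; in general it solves the strictly more constrained problem in which every snapshot must be covered within its own time slot.

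The repair is the paper's actual (weaker) observation, which is all that is needed: minimality implies only the cardinality bound $|\mathcal{S}_i|\leq \tau(G_i)\leq k$, where $\tau(G_i)$ is the vertex cover number of $G_i$ --- indeed, if $|\mathcal{S}_i|>\tau(G_i)$, replacing $\mathcal{S}_i$ by a minimum vertex cover of $G_i$ placed at slot $i$ preserves feasibility (any edge temporally covered at slot $i$ lies in $E_i$ and remains covered there) and strictly decreases $|\mathcal{S}|$. Accordingly, one restricts $X,A_1,\ldots,A_\Delta$ to \emph{arbitrary} subsets of cardinality at most $k$, keeping the feasibility test of line~\ref{alg-synamic-line-5} to decide which tuples get finite values; Lemma~\ref{recursion-dynamic-lem} applies unchanged on this family, and by the cardinality bound no optimal solution is lost. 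With this correction your count goes through verbatim: $O(n^k)$ candidate sets per slot, hence $O(n^{k\Delta})$ tuples, with $O(\Delta(n+m))$ per feasibility check and $O(n^k)$ per inner minimization, giving the claimed $O(T\Delta(n+m)\cdot n^{k(\Delta+1)})$ bound.
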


In particular, in the special, yet interesting, case of always star temporal graphs 
(i.e.~where every snapshot $G_{i}$ is a star graph), 
our search at every step reduces to just one binary choice for each of the previous $\Delta$ time slots, 
of whether to include the central vertex of the star in a snapshot or not.
Hence we have the following theorem as a direct implication of Theorem~\ref{bounded-vc-number-thm}.

\begin{theorem}
\label{always-star-thm}
\textsc{SW-TVC} on always star temporal graphs can be solved in $O(T\Delta (n+m)\cdot 2^{\Delta})$ time.
\end{theorem}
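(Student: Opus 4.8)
The plan is to specialize the dynamic programming algorithm of Theorem~\ref{dynamic-algorithm-thm} to the always star setting, exploiting the fact that a star admits an especially restrictive family of minimum vertex covers. First I would fix, for every time slot $i$, a canonical center $c_i$ of the star snapshot $G_i$, i.e.~a vertex incident to every edge of $G_i$ (when $G_i$ has no edges we take $\mathcal{S}_i=\emptyset$, and when $G_i$ is a single edge we break the tie arbitrarily). The key structural observation is that in any minimum-cardinality sliding $\Delta$-window temporal vertex cover $\mathcal{S}$ we may assume, without loss of generality, that $\mathcal{S}_i\in\{\emptyset,\{(c_i,i)\}\}$ for every $i$.

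To establish this observation I would use an exchange argument. The only edges that a vertex appearance at time $i$ can temporally cover are the edges of $G_i$, all of which are incident to $c_i$; hence the single appearance $(c_i,i)$ temporally covers every edge active at time $i$. Consequently, if $\mathcal{S}_i\neq\emptyset$ but $\mathcal{S}_i\neq\{(c_i,i)\}$, then replacing $\mathcal{S}_i$ by $\{(c_i,i)\}$ leaves all appearances at the other time slots untouched, does not increase the cardinality (since $|\mathcal{S}_i|\geq 1$), and temporally covers at time $i$ everything that was covered there before. Thus feasibility is preserved in every window $W_s$ containing $i$, and the replacement yields a solution that is still optimal.

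Given this restriction, I would observe that in the recursion of Lemma~\ref{recursion-dynamic-lem} each of the sets $A_1,\ldots,A_\Delta$ (with $A_j$ associated with time slot $t+j-1$) ranges over only the two candidates $\emptyset$ and $\{c_{t+j-1}\}$, so that Algorithm~\ref{alg:dynam} needs to enumerate only $2^\Delta$ tuples per time slot rather than $2^{n\Delta}$; likewise the inner minimization over $X$ runs over the two candidates $\emptyset$ and $\{c_{t-1}\}$ and contributes only a constant factor. Re-running the running-time analysis of Theorem~\ref{dynamic-algorithm-thm} with these reduced search spaces, each of the $T$ time slots contributes $2^\Delta$ tuples, and the dominant per-tuple cost remains the $O(\Delta(n+m))$ feasibility check of line~\ref{alg-synamic-line-5}, which yields the claimed bound $O(T\Delta(n+m)\cdot 2^\Delta)$.

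I expect the main obstacle to be the correctness of the binary-choice restriction rather than the running-time count: one must argue carefully that collapsing every nonempty snapshot-solution to its center is simultaneously compatible with all overlapping windows $W_s$, and that fixing a single canonical center a priori (in particular for the degenerate single-edge and edgeless snapshots) does not exclude some optimal solution. Once the exchange argument above is in place, the theorem follows as a restriction of Theorem~\ref{bounded-vc-number-thm} in which the vertex cover number bound $k=1$ is sharpened from ``any single vertex'' (the naive $n$ choices) to ``the center or nothing'' (exactly two choices), which is precisely what turns the factor $n^{k(\Delta+1)}$ into $2^\Delta$.
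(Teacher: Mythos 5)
Your proposal is correct and follows essentially the same route as the paper, which likewise derives the result by specializing the dynamic program of Theorem~\ref{dynamic-algorithm-thm} (via Theorem~\ref{bounded-vc-number-thm}) to a single binary choice per time slot --- include the star's center or nothing. Your explicit exchange argument (collapsing any nonempty $\mathcal{S}_i$ to $\{(c_i,i)\}$, which is safe because every edge of the star snapshot $G_i$ is incident to $c_i$) simply makes rigorous what the paper states in one line, and your running-time accounting matches the paper's.
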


Note here that, although for every time step $i=1,2,\ldots,T$ the size of a (static) minimum vertex cover of $G_i$ provides an upper bound on the number of vertex appearances that should be selected by a minimum-cardinality \SWTVC\ at time $i$, the \emph{set of vertices} selected at time $i$ by 
an optimum solution to \textsc{SW-TVC} need not be a subset of any (static) minimum vertex cover of $G_i$. 
We illustrate this with the example of Figure~\ref{example_static_VC-fig}.

\begin{figure}[h]
	\centering
	\includegraphics[width=0.48\textwidth]{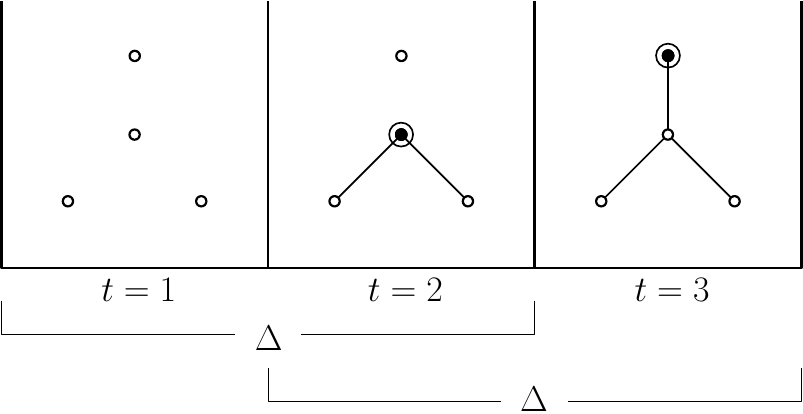}  
	\caption{A minimum-cardinality \SWTVC\ which selects a vertex appearance at time step $t=3$, whose corresponding vertex belongs in no minimum vertex cover of $G_3$.}
	\label{example_static_VC-fig}
\end{figure}

\section{Approximation hardness of \textsc{2-TVC}}
\label{sec:hardDTVC}

In this section we study the complexity of \DTVCproblem\, where $\Delta$ is constant.
We start with an intuitive observation that, for every fixed $\Delta$, the problem \DTVCproblemD{($\Delta+1$)} is at least as hard as \DTVCproblem.
Indeed, let $\mathcal{A}$ be an algorithm that computes a minimum-cardinality \SWTVCD{$(\Delta+1)$} of $(G, \lambda)$. 
It is easy to see that a minimum-cardinality \SWTVC\ of $(G, \lambda)$ can also be computed using $\mathcal{A}$, 
if we amend the input temporal graph by inserting one edgeless snapshot after every 
$\Delta$ consecutive snapshots of $(G,\lambda)$, see~Figure~\ref{fig:Delta-increase}.

\begin{figure}[h]
	\centering
	\includegraphics[scale=0.8]{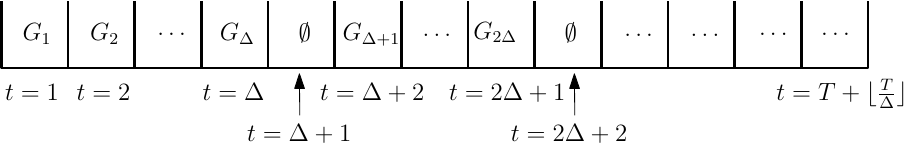}  
	\caption{Inserting ``empty'' time slots to compute a minimum-cardinality \SWTVC\ on $(G, \lambda)$ 
	using algorithm $\mathcal{A}$ for \DTVCproblemD{($\Delta+1$)}.}
	\label{fig:Delta-increase}
\end{figure}

Since the \DTVCproblemD{$1$} problem is equivalent to solving $T$ instances of \textsc{Vertex Cover}
(on static graphs), the above reduction demonstrates in particular 
that, for any natural $\Delta$, \DTVCproblem\ is at least as hard as \textsc{Vertex Cover}.
Therefore, if \textsc{Vertex Cover} is hard for a class $\mathcal{X}$ of static graphs, 
then \DTVCproblem\ is also hard for the class of always $\mathcal{X}$ temporal graphs. 
In this section, we show that the converse is not true.
Namely, we reveal a class $\mathcal{X}$ of graphs, for which \textsc{Vertex Cover} can be solved in \emph{linear} time, 
but \DTVCproblemD{2} is NP-hard on always $\mathcal{X}$ temporal graphs.
In fact, we show the even stronger result that \DTVCproblemD{2} does not admit a PTAS on always $\mathcal{X}$ temporal graphs, unless P=NP.

To prove the main result (in Theorem~\ref{th:2DeltaSHard}) we start with an auxiliary lemma, showing that, unless P=NP, \textsc{Vertex Cover} 
does not admit a PTAS on the class $\mathcal{Z}$ of graphs which can be obtained from a cubic graph by subdividing every edge exactly 4 times.

\begin{lemma}\label{lem:apxY}
\textsc{Vertex Cover} on $\mathcal{Z}$ does not admit a PTAS, unless P=NP.
\end{lemma}
\begin{proof}
	\textsc{Vertex Cover} does not admit a PTAS when the input is restricted to be a cubic graph, unless P=NP\cite{alimonti2000some}. 
	By a simple reduction we will show that \textsc{Vertex Cover} still does not admit a PTAS when the input is restricted to be a graph that belongs to the class $\mathcal{Z}$, i.e.~a graph obtained from a cubic graph by subdividing every edge exactly 4 times. 
For an illustration we refer to Figure~\ref{fig:Cubic}, in which the first graph $K_4$ is the cubic graph on four vertices and the second graph is the graph obtained 
from $K_4$ by subdividing every edge 4 times.
	
	Given a cubic graph $G$, let $H \in \mathcal{Z}$ be  the graph obtained from $G$ by subdividing
	each edge 4 times. It is well known and can be easily verified that a double subdivision of an edge increases
	the size of a minimum vertex cover exactly by one. Hence, denoting by $\tau(G)$ and $\tau(H)$
	the sizes of minimum vertex covers of $G$ and $H$, respectively, we have:
	\begin{equation}\label{eq:vcGH}
		\tau(H) = \tau(G) + 2m,
	\end{equation}
	where $m$ is the number of edges in $G$.

Now we will prove that, starting from an arbitrary (not necessarily optimum) vertex cover $S_H$ for~$H$, 
we can efficiently compute a vertex cover $S_G$ for $G$ of size $|S_G| \leq |S_H| - 2m$. 
To this end, we show how to construct $S_G$ from $S_H$ by decreasing the cardinality of $S_H$ 
by at least two for every edge of $G$. 
Initially, we set $S_G = S_H$. 
Let $uv \in E(G)$ be an edge in $G$, and let $ua_1, a_1a_2,a_2a_3,a_3a_4,a_4v$ be the edges of $H$ 
that correspond to the 4-subdivision of $uv$. 
Note that $S_H$ contains at least two of the vertices $a_1, a_2, a_3, a_4$. 
Suppose that at least one of the vertices $u,v$ is contained in $S_H$. 
Then we just remove from $S_G$ all its vertices among $\{a_1, a_2, a_3, a_4\}$. 
Suppose otherwise that none of $u,v$ is contained in $S_H$. 
Then it is easy to verify that $S_H$ contains at least three of the vertices $a_1, a_2, a_3, a_4$. 
In this case, we add $u$ to $S_G$ and we remove from $S_G$ all its vertices among $\{a_1, a_2, a_3, a_4\}$. 
After repeating this procedure for every edge of $G$, we obtain a set $S_G$ that still covers all edges of $G$, while $|S_G| \leq |S_H| - 2m$ holds.

	To complete the proof, suppose for the sake of contradiction that there exists a PTAS 
	for \textsc{Vertex Cover} in $\mathcal{Z}$. 
	That is, for every $\epsilon >0$, we can compute in polynomial time a vertex cover $S_H$ of $H$
	such that $|S_H| \leq (1 + \epsilon)\tau(H)$. As we showed above, starting from $S_H$, 
	we can compute in polynomial time a vertex cover $S_G$ of $G$ such that
	\begin{equation}
		\begin{split}
			|S_G|  & \leq |S_H| - 2m \\
			& \leq (1 + \epsilon)\tau(H) - 2m \\
			& = (1 + \epsilon)(\tau(G) + 2m) - 2m \\
			& = (1 + \epsilon)\tau(G) + 2\epsilon \cdot m \\
			& \leq (1 + \epsilon)\tau(G) + 6\epsilon \cdot \tau(G) \\
			& = (1 + 7\epsilon)\tau(G),
		\end{split}
	\end{equation}
	where in the first equality we used (\ref{eq:vcGH}), and in the last inequality we used the fact
	that $m \leq 3\tau(G)$, because every vertex in the cubic graph $G$ covers exactly 3 edges.
	Summarizing, the existence of a PTAS for \textsc{Vertex Cover}  in the class $\mathcal{Z}$
	would imply the existence of a PTAS in the class of cubic graphs, which is a
	contradiction, unless P=NP~\cite{alimonti2000some}. 
\end{proof}

Let now $\mathcal{X}$ be the class of graphs whose connected components are 
induced subgraphs of graph $\Psi$ (see Figure 1). Clearly, \textsc{Vertex Cover} is linearly solvable on graphs from $\mathcal{X}$.
We will show that, unless P=NP, \DTVCproblemD{2} does not admit a PTAS on always $\mathcal{X}$ 
temporal graphs by using a reduction from \textsc{Vertex Cover} on $\mathcal{Z}$.

\begin{figure}[h]
\centering
\includegraphics[scale=0.9]{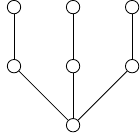}  
\caption{The graph $\Psi$.}
\label{fig:graphS}
\end{figure}

\begin{theorem}
\label{th:2DeltaSHard}  
\DTVCproblemD{2} on always $\mathcal{X}$ temporal graphs does not admit a PTAS, unless P=NP.
\end{theorem}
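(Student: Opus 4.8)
The plan is to give an approximation-preserving (L-)reduction from \textsc{Vertex Cover} on the class $\mathcal{Y}$, which is APX-hard by Lemma~\ref{lem:apxY}, to \DTVCproblemD{2} on always $\mathcal{X}$ temporal graphs. Given $H\in\mathcal{Y}$, that is, the $4$-subdivision of a cubic graph $G$ with $m$ edges, I would construct a temporal graph $(G',\lambda)$ with $\Delta=2$ whose optimum \SWTVC\ value is an explicit affine function of $\tau(H)$, and then invoke the robustness of the approximation gap. I would keep the underlying graph essentially equal to $H$, so that the maximum degree stays at most $3$, and encode all hardness in the time-labeling $\lambda$.

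The starting point is the observation that, because $\Delta=2$, a vertex appearance $(v,t)$ temporally covers exactly the edges of $H$ incident to $v$ that are active at time $t$, so $|\mathcal{S}|=\sum_t|\mathcal{S}_t|$, and by Definition~\ref{sliding-temporal-vertex-cover-def} for \emph{every} window $W_t=[t,t+1]$ and every edge $e\in E[W_t]$ some appearance of $e$ at an active time inside $W_t$ must be selected. The construction then schedules the edges of $H$ over time so that two things hold: (i) the three edges incident to each branch (degree-$3$) vertex $u$ of $H$ share one common snapshot, forming a star $K_{1,3}$, so that the single appearance $(u,t_u)$ models ``$u$ is placed in the cover'' and is shared by all three incident subdivided paths; and (ii) the five edges of each $4$-subdivided path $u\,a_1 a_2 a_3 a_4\,v$ occupy consecutive time slots, so that the length-$2$ windows chain their coverage constraints along the path. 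The slots must be chosen so that every snapshot is a disjoint union of induced subgraphs of the gadget $\Psi$ of Figure~\ref{fig:graphS}; in particular every component has at most $7$ vertices and maximum degree $3$, witnessing that $(G',\lambda)$ is an always $\mathcal{X}$ temporal graph.

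The heart of the argument is a per-path analysis. For a fixed subdivided path I would enumerate, using the window constraints above, the cheapest way to satisfy all windows overlapping the path, as a function of whether the endpoint appearances $(u,t_u)$ and $(v,t_v)$ are used. Mirroring the static computation behind Lemma~\ref{lem:apxY}, the cheapest cover of a path uses two fixed internal appearances plus, \emph{unless at least one of its two endpoints is already selected}, one extra appearance; hence the selected branch-vertices are exactly forced to form a vertex cover of $H$, and, crucially, since the three edges at $u$ live in one snapshot, the decision ``use $(u,t_u)$'' is consistent across all paths meeting $u$. Summing over all paths yields $\mathrm{OPT}_{\textrm{2-TVC}}(G',\lambda)=\tau(H)+c$ for an explicit constant $c$ (the $2m$ forced internal appearances), and the same local exchange argument turns any \SWTVC\ $\mathcal{S}$ of $(G',\lambda)$ into a vertex cover $S_H$ of $H$ with $|S_H|-\tau(H)\le |\mathcal{S}|-\mathrm{OPT}_{\textrm{2-TVC}}(G',\lambda)$. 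These are precisely the two inequalities required of an L-reduction, so the APX-hardness of \textsc{Vertex Cover} on $\mathcal{Y}$ transfers to \DTVCproblemD{2} on always $\mathcal{X}$ temporal graphs; together with the $3$-approximation noted in the footnote this even yields APX-completeness.

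I expect the main obstacle to be the \emph{lower-bound} direction of the per-path analysis, carried out \emph{simultaneously} with the structural constraint. One must design $\Psi$ and the exact labeling so that no temporal cover can cheat --- by paying for an internal vertex at a ``wrong'' time, or by exploiting the window overlap shared between two distinct paths --- while still keeping every snapshot a disjoint union of induced subgraphs of $\Psi$ on at most $7$ vertices of degree at most $3$. Verifying that these two demands are compatible, i.e.\ that the schedule realizing the tight vertex-cover cost never creates a component outside $\mathcal{X}$, and conversely that the small-component schedule admits no cover cheaper than $\tau(H)+c$ (here Observation~\ref{obs:feasible-partial} and a window-by-window exchange argument would be the main tools), is the delicate bookkeeping at the core of the proof.
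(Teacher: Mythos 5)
There is a genuine gap: your proposal is a plan rather than a proof, and the one concrete mechanism it relies on does not work. You never specify the labeling $\lambda$; the construction is stated as a list of desiderata (star snapshots at branch vertices, path edges in ``consecutive time slots'', no cheating, all components induced subgraphs of $\Psi$), and you yourself flag that the lower-bound direction --- the per-path analysis showing $\mathrm{OPT}=\tau(H)+c$ --- is unverified ``delicate bookkeeping''. That analysis is not bookkeeping; it is the entire content of the theorem, and it is missing. Worse, the coupling device you appeal to is illusory in the regime you describe. If an edge $e$ carries a single label $t$, then with $\Delta=2$ every window containing $t$ in which $e$ appears forces $e$ to be covered \emph{at slot $t$ itself}, since Definition~\ref{sliding-temporal-vertex-cover-def} requires coverage at an active slot inside the window. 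Hence if each of the five edges of a subdivided path occupies its own slot, nothing ``chains'': the feasibility constraints decompose completely, any optimal solution is a union of minimum vertex covers of the individual snapshots, and $\mathrm{OPT}=\sum_t \tau(G_t)$ is computable in linear time because every snapshot lies in $\mathcal{X}$. Such an instance cannot encode \textsc{Vertex Cover} on $\mathcal{Y}$ at all. Genuine choice arises only when edges are active in \emph{several} slots of a window, and your sketch neither builds this in nor analyzes it.

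For contrast, the paper's proof makes exactly this multiplicity the engine of the reduction, and does so with a far simpler construction: lifetime $T=2$, so there is a single window $W_1=[1,2]$. Given $H\in\mathcal{Y}$ with degree-$3$ vertex set $R$, snapshot $G_1$ is $H$ minus the middle edges $a_2a_3$ of each subdivided path (the edges with both ends at distance exactly $2$ from $R$), and $G_2=H-R$; thus $G_1$ decomposes into copies of the spider $\Psi$ around the branch vertices, $G_2$ into paths $a_1a_2a_3a_4$, and the edges $a_1a_2$, $a_3a_4$ are active in \emph{both} slots, which is where the optimization lives. One then proves the exact equality $\sigma=\tau(H)$ (not an affine offset $\tau(H)+2m$ as you predict): any \SWTVCD{2} $(S_1,1)\cup(S_2,2)$ yields the vertex cover $S_1\cup S_2$ of $H$, giving $\tau\leq\sigma$; conversely a minimum vertex cover $C$ of $H$ is split into $S_1$ (vertices of $C$ of degree $3$ or adjacent to one) and $S_2=C\setminus S_1$, and a short case analysis on the three edge types shows $(S_1,1)\cup(S_2,2)$ is feasible, giving $\sigma\leq\tau$. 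This makes the reduction trivially approximation-preserving and avoids your long-lifetime gadget entirely. Your instinct to mirror the subdivision arithmetic of Lemma~\ref{lem:apxY} inside the temporal dimension is not absurd, but as proposed it both omits the construction and rests on a coverage-propagation effect that $\Delta=2$ does not provide.
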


\begin{proof}
	To prove the theorem we will reduce \textsc{Vertex Cover} on $\mathcal{Z}$
	to \DTVCproblemD{2} on always $\mathcal{X}$ temporal graphs.
	Let $H = (V,E)$ be a graph in $\mathcal{Z}$.
	First we will show how to construct an always $\mathcal{X}$ temporal graph $(G,\lambda)$ of lifetime 2.
	Then we will prove that the size $\tau$ of a minimum vertex cover of $H$ is equal to the size $\sigma$
	of a minimum-cardinality \SWTVCD{2} of $(G, \lambda)$.  

	Let $R \subseteq V$ be the set of vertices of degree 3 in $H$.
	We define $(G, \lambda)$ to be a temporal graph of lifetime~2, where snapshot $G_1$ is obtained from $H$
	by removing the edges with both ends being at distance exactly~2 from $R$, and snapshot $G_2 = H - R$.
	Figure~\ref{fig:Cubic} illustrates the reduction for $H = K_4$.

	Let $\mathcal{S} = (S_1, 1) \cup (S_2, 2)$ be an arbitrary \SWTVCD{2} of $(G, \lambda)$ for some $S_1, S_2 \subseteq V$. 
	Since every edge of $H$ belongs to
	at least one of the graphs $G_1$ and $G_2$, the set $S_1 \cup S_2$ covers all the edges of $H$. 
	Hence, $\tau \leq |S_1 \cup S_2| \leq |S_1| + |S_2| = |\mathcal{S}|$. As $\mathcal{S}$ was chosen arbitrarily we further conclude that $\tau \leq \sigma$.

	To show the converse inequality, let $C \subseteq V$ be a minimum vertex cover of $H$. 
	Let $S_1$ be those vertices in $C$ which either have degree 3, or have a neighbor of degree 3. 
	Let also $S_2 = C \setminus S_1$. We claim that $(S_1, 1) \cup (S_2, 2)$ is a \SWTVCD{2} of $(G, \lambda)$.
	First, let $e \in E$ be an edge in $H$ incident to a vertex of degree 3. Then, by the construction, $e$ is active only in time slot 1,
	i.e.~$e \in E_1 \setminus E_2$, and a vertex $v$ in $C$ covering $e$ belongs to $S_1$. Hence, $e$ is temporally covered
	by $(v,1)$ in $(G, \lambda)$. Let now $e \in E$ be an edge in $H$ whose both end vertices have degree 2. 
	If one of the end vertices of $e$ is adjacent to a vertex of degree 3 in $H$, then, by the construction, $e$ is active in both
	time slots $1$ and $2$. Therefore, since $C = S_1 \cup S_2$, edge $e$ will be temporally covered in $(G, \lambda)$ in at least
	one of the time slots.
	Finally, if none of the end vertices of $e$ is adjacent to a vertex of degree 3 in $H$, then $e$ is active only in time slot~2, i.e.~$e \in E_2 \setminus E_1$. Moreover, by the construction a vertex $v$ in $C$ covering $e$ belongs to $S_2$.
	Hence, $e$ is temporally covered by $(v,2)$ in $(G, \lambda)$. This shows that $(S_1, 1) \cup (S_2, 2)$ is a 
	\SWTVCD{2} of $(G, \lambda)$, and thus $\sigma \leq |S_1| + |S_2| = |C| = \tau$. 
Therefore $\sigma = \tau$.
	
Note that for any $r\in \mathbb{N}$, any feasible solution of \DTVCproblemD{2} on $(G,\lambda)$ 
with size $r$ defines a vertex cover of $H$ with size at most $r$. 
Therefore, since $\sigma = \tau$ as we proved above, 
any PTAS for \DTVCproblemD{2} on always~$\mathcal{X}$ temporal graphs 
implies a PTAS for \textsc{Vertex Cover} in the class $\mathcal{Z}$, 
which is a contradiction by Lemma~\ref{lem:apxY}, unless P=NP. 
\end{proof}

\begin{figure}[h]
\centering
\includegraphics[scale=0.58]{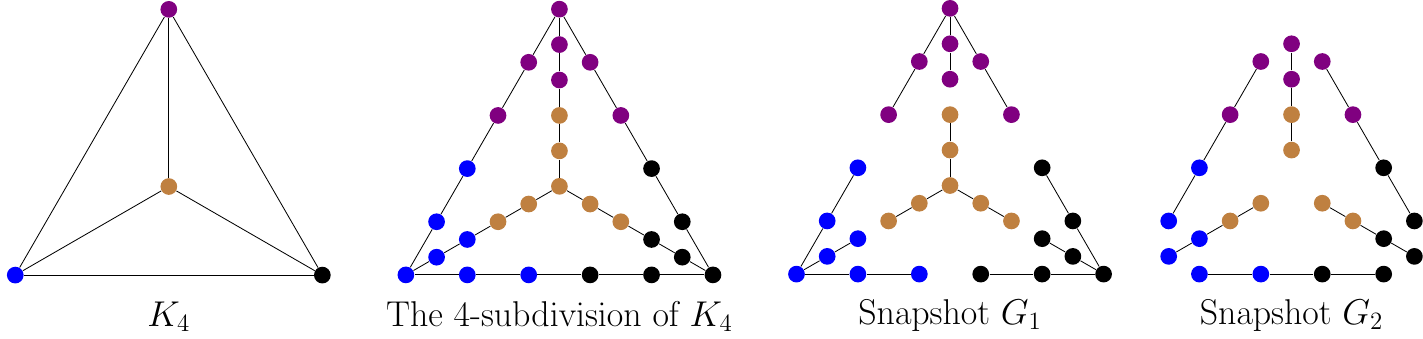}  
\caption{A cubic graph $K_4$, its 4-subdivision, and the corresponding
snapshots $G_1$ and $G_2$}
\label{fig:Cubic}
\end{figure}

	\section{Approximation algorithms}
	\label{sec:approx}
	
	In this section we provide several approximation algorithms for SW-TVC. 
	The approximation factors that we achieve depend on various parameters of the input temporal graph. 
	The values of these parameters (and of the corresponding approximation factors) are in general 
	incomparable to each other, and thus the best option for approximating the optimal solution depends 
	on the values of those parameters in each specific input instance.

	\subsection{Approximations in terms of $T$, $\Delta$, and the largest edge frequency}
	\label{subsec-approx-T-Delta-Freq}

	We begin by presenting a reduction from \SWTVCProblemShort\ 
	to \textsc{Set Cover}, which proves useful for deriving approximation
	algorithms for the original problem.
	We note here the similarity of the construction in this reduction to the construction in the reduction 
	of \textsc{TVC} to \textsc{Set Cover} presented in 	Theorem~\ref{TVC-general-approximation}.
	Consider an instance, $(G, \lambda)$ and $\Delta\leq T$, of the
	\SWTVCProblemShort\ problem. Construct an instance of \textsc{Set Cover} as
	follows:
	Let the universe be $U= \{ (e,t) :  e \in E[W_t], t \in [1,T - \Delta + 1] \}$,
	i.e.~the set of all pairs $(e,t)$ of an edge $e$ and a time slot $t$ 
	such that $e$ appears (and so must be temporally covered) within window $W_t$. 
	For every vertex appearance $(v,s)$ we define $C_{v,s}$ to be
	the set of elements $(e,t)$ in the universe $U$, such that $(v,s)$ temporally covers $e$ in
	the window $W_t$. 
	Formally, 
	$C_{v,s} = \{ (e,t) :  v \text{ is an endpoint of } e, e\in E_{s}, \text{ and }
	s \in W_t \}$.
	Let $\mathcal{C}$ be the family of all sets $C_{v,s}$, where $v \in V, s \in [1,T]$.
	The following lemma
	shows that finding a minimum-cardinality \SWTVC\ of $(G,\lambda)$ 
	is equivalent to finding a minimum-cardinality family of sets $C_{v,s}$ that
	covers the universe $U$.
	
	\begin{lemma}\label{cl:toSetCoverReduction}
		A family $\mathcal{C} = \{ C_{v_1,t_1}, \ldots, C_{v_k,t_k} \}$ is a set cover
		of $U$ if and only if 
		$\mathcal{S} = \{ (v_1,t_1), \ldots, (v_k,t_k) \}$ is a \SWTVC\ of $(G,
		\lambda)$.
	\end{lemma}
	\begin{proof}
		First assume that $\mathcal{C}$ is a set cover of $U$, but $\mathcal{S}$
		is not a \SWTVC\ of $(G, \lambda)$. Then there exists a window $W_r$ for some
		$r \in [1,T-\Delta+1]$, such that an edge
		$uv$ appears in $W_r$ but is not temporally covered in $W_r$ by
		$\mathcal{S}$.
		This means that, for every $j \in W_r$ such that $uv \in E_j$, neither $(u,j)$ nor $(v,j)$ 
		belongs to $\mathcal{S}$.
		Therefore, $C_{u,j}, C_{v,j} \notin \mathcal{C}$ for all $j \in W_r$ such that
		$uv \in E_j$.
		But then $\mathcal{C}$ does not cover $(uv,r) \in U$, which is a
		contradiction.
		
		Conversely, assume that $\mathcal{S}$ is a \SWTVC\ of $(G, \lambda)$, but
		$\mathcal{C}$ is not a set cover of $U$, i.e.~there exists some $(uv,r) \in
		U$
		which belongs to none of the sets in $\mathcal{C}$. The latter means that 
		$C_{u,j}, C_{v,j} \notin \mathcal{C}$, and therefore $(u,j), (v,j) \notin
		\mathcal{S}$, for every
		$j \in W_r$ such that $uv \in E_j$. Therefore $uv$ is not covered in $W_r$,
		which is a contradiction.
	\end{proof}
	
	\begin{description}
		\item[$(\ln{n} + \ln{\Delta} + \frac{1}{2})$-approximation.]
		In the instance of \textsc{Set Cover} constructed by the above reduction, every set $C_{v,s}$ in $\mathcal{C}$ contains at most $n\Delta$ elements of the universe $U$. 
		Indeed, the vertex appearance $(v,s)$ temporally covers at most $n-1$ edges, each in at most $\Delta$ windows (namely from window $W_{s-\Delta+1}$ up to window $W_s$). 
		Thus we can apply the polynomial-time greedy algorithm of~\cite{DuhFurer97} for \textsc{Set Cover} which achieves an approximation ratio of $H_{n\Delta}-\frac{1}{2}$, 
		where $n\Delta$ is the maximum size of a set in the input instance and $H_{n\Delta}=\sum_{i=1}^{n\Delta}\frac{1}{i} \leq \ln{n} + \ln{\Delta} + 1$ is the $n\Delta$-th harmonic number.

		\item[$2k$-approximation, where $k$ is the maximum edge frequency.]
		Given a temporal graph $(G,\lambda)$ and an edge $e$ of $G$, the $\Delta$-frequency
		of $e$ is the maximum number of time slots at which $e$
		appears within a $\Delta$-window. Let $k$ denote the maximum
		$\Delta$-frequency over
		all edges of $G$.
		Clearly, for a particular $\Delta$-window $W_t$, an edge $e \in E[W_t]$ can be
		temporally 
		covered in $W_t$ by at most $2 k$ vertex appearances.
		So in the above reduction to \textsc{Set Cover}, 
		every element $(e,t) \in U$ belongs to at most $2 k$ sets
		in $\mathcal{C}$. Therefore, the optimal solution of the constructed
		instance of \textsc{Set Cover}
		can be approximated within a factor of $2k$ in polynomial time
		\cite[p.~118-9]{vazirani2003approximation}, yielding a polynomial-time $2k$-approximation
		for \SWTVCProblemShort.
		
		\item[$2\Delta$-approximation.]
		Since the maximum $\Delta$-frequency of an edge is always upper-bounded by $\Delta$, the 
		previous algorithm gives a worst-case polynomial-time $2\Delta$-approximation for \SWTVCProblemShort\ on arbitrary temporal graphs.
		
	\end{description}

	\subsection{Approximation in terms of maximum degree of snapshots}
	\label{subsec-approx-max-degree-snap}
	
	In this section we give a polynomial-time $d$-approximation algorithm for the
	\SWTVCProblemShort\ problem on \emph{always degree at most $d$} temporal graphs, that is, on temporal graphs where the maximum degree in each snapshot is at most $d$.
	In particular, the algorithm computes an optimum solution (i.e.~with approximation ratio $d=1$) for always matching (i.e.~always degree at most $1$) temporal graphs. 
	As a building block, we first provide an exact $O(T)$-time algorithm for optimally solving \SWTVCProblemShort\ in the class of single-edge temporal graphs, namely temporal graphs whose underlying graph is a single edge.
To that end, we reduce \SWTVCProblemShort\ to \textsc{Interval Covering}, 
leading to an intuitive algorithm which selects the ``rightmost'' appearance 
of the edge of the temporal graph within a time window in which the edge has 
not been covered yet (starting from the first time window). In fact, this 
algorithm is a direct translation of a known greedy algorithm which solves 
\textsc{Interval Covering} in the \SWTVCProblemShort\ single-edge-temporal-graphs setting.
Once we have established this exact algorithm for single-edge temporal graphs, 
we prove that for always degree at most $d$ temporal graphs we can $d$-approximate 
the optimal solution by independently solving \textsc{SW-TVC} for every single-edge temporal subgraph 
and then taking the union of these solutions.

	\paragraph*{Single-edge temporal graphs}	
	Consider a temporal graph $(G_{0}, \lambda)$ where $G_{0}$ is the single-edge graph, i.e.~$V(G_{0}) = \{ u, v \}$ and $	E(G_{0}) = \{ uv \}$. We reduce \SWTVCProblemShort\ on $(G_{0}, \lambda)$ to an
	instance of	\textsc{Interval Covering}, which has a known greedy algorithm that we then translate to an algorithm for \SWTVCProblemShort\ on single-edge temporal graphs.
	
	\vspace{0,1cm} \noindent \fbox{ 
		\begin{minipage}{0.96\textwidth}
			\begin{tabular*}{\textwidth}{@{\extracolsep{\fill}}lr} \textsc{Interval Covering} & \\ \end{tabular*}
			
			\vspace{1.2mm}
			{\bf{Input:}}  A family $\mathcal{I}$ of intervals in the line.\\
			{\bf{Output:}} A minimum-cardinality subfamily $\mathcal{I}' \subseteq \mathcal{I}$
			such that $\bigcup_{I \in \mathcal{I}} = \bigcup_{I \in \mathcal{I}'}$.
	\end{minipage}} \vspace{0,3cm}

	We construct the family $\mathcal{I}$ as follows. For every $i =1,2, \ldots, T$ such that $uv \in E_i$ 
	we include into $\mathcal{I}$ the interval $I_i = [i - \Delta + 1, i] \cap
	[1,T-\Delta+1]$,
	which contains the first time slot 
	of each of those $\Delta$-windows that include time slot $i$.
	
	\begin{lemma}\label{cl:IntervalCoveringReduction}
		Let $i_1, i_2, \ldots, i_k$ be such that $uv \in E_{i_j}$ for every $j= 1,2, \ldots, k$.
		Then $\mathcal{I}' = \{ I_{i_1}, \ldots, I_{i_k} \}$ is an interval covering
		of
		$\mathcal{I}$ if and only if
		$\mathcal{S} = \{ (u,i_1), (u,i_2), \ldots, (u,i_k) \}$ is a \SWTVC\ of
		$(G_{0}, \lambda)$.
	\end{lemma}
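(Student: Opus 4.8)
The plan is to prove the two directions of the equivalence by directly translating between the covering condition for windows and the covering condition for intervals, hinging on a single pivotal observation. The observation is this: for an active time slot $i$ (i.e.\ $uv \in E_i$), the appearance $(u,i)$ temporally covers $uv$ in exactly those windows $W_r$ for which $i \in W_r$; since $W_r = [r, r+\Delta-1]$, we have $i \in W_r$ if and only if $i-\Delta+1 \leq r \leq i$, and intersecting with the admissible range $[1,T-\Delta+1]$ of window-starts yields precisely $r \in I_i$. I would also record that, because $G_0$ is a single edge, the appearances $(u,s)$ and $(v,s)$ cover $uv$ identically, which is why it suffices to consider $\mathcal{S}$ consisting only of $u$-appearances at active slots, as in the statement. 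Finally, the union $\bigcup_{I\in\mathcal{I}} I$ is exactly the set of window-starts $r\in[1,T-\Delta+1]$ with $uv \in E[W_r]$, i.e.\ the windows in which the edge must be covered.

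For the forward direction I would assume $\mathcal{I}' = \{I_{i_1},\ldots,I_{i_k}\}$ is an interval covering of $\mathcal{I}$ and take an arbitrary window $W_r$ with $uv \in E[W_r]$; then $r \in \bigcup_{I\in\mathcal{I}} I = \bigcup_{I\in\mathcal{I}'} I$, so $r \in I_{i_j}$ for some $j$. By the pivotal observation this gives $i_j \in W_r$, and since $uv \in E_{i_j}$, the appearance $(u,i_j)\in\mathcal{S}$ temporally covers $uv$ within $W_r$. As $W_r$ was an arbitrary window containing the edge, $\mathcal{S}$ is a \SWTVC\ of $(G_0,\lambda)$.

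For the converse I would assume $\mathcal{S} = \{(u,i_1),\ldots,(u,i_k)\}$ is a \SWTVC\ and prove $\bigcup_{I\in\mathcal{I}} I \subseteq \bigcup_{I\in\mathcal{I}'} I$, the reverse inclusion being immediate from $\mathcal{I}'\subseteq\mathcal{I}$. Taking any $r$ in the former union, we have $uv \in E[W_r]$, so the covering property of $\mathcal{S}$ supplies some $(u,i_j)\in\mathcal{S}$ with $i_j \in W_r$ and $uv \in E_{i_j}$; the pivotal observation then gives $r \in I_{i_j} \subseteq \bigcup_{I\in\mathcal{I}'} I$, whence $\mathcal{I}'$ is an interval covering. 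The proof has essentially no hard step; the only thing to handle with care is the clipping to $[1,T-\Delta+1]$, so that the equivalence $i\in W_r \Leftrightarrow r\in I_i$ is applied over exactly the same range of admissible window-starts in both directions.
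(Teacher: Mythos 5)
Your proof is correct and takes essentially the same approach as the paper's: both hinge on the correspondence $r \in I_i \Leftrightarrow i \in W_r$ (with the clipping to $[1,T-\Delta+1]$), which translates temporal coverage of $uv$ in a window into coverage of that window's start by an interval. The paper presents both directions as arguments by contradiction while you argue directly via the equality $\bigcup_{I\in\mathcal{I}} I = \bigcup_{I\in\mathcal{I}'} I$, but this is only a cosmetic difference in presentation.
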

	\begin{proof}
		Assume first that $\mathcal{I}'$ is an interval covering of $\mathcal{I}$, but
		$\mathcal{S}$ is not
		a \SWTVC\ of $(G_{0}, \lambda)$. The latter means that there exists a
		$\Delta$-window $W_t$
		such that $uv$ exists at some time slot $s$ in $W_t$, but $uv$ is not
		temporally covered by
		any vertex appearance in $\mathcal{S}[W_t]$. Therefore $t \notin I_{i_1} \cup
		I_{i_2} \cup \ldots \cup I_{i_k}$,
		but $t \in I_s$, which contradicts the assumption that $\mathcal{I}'$ is an
		interval covering of $\mathcal{I}$.
		
		Conversely, assume that $\mathcal{S}$ a \SWTVC\ of $(G_{0}, \lambda)$, but
		$\mathcal{I}'$
		is not an interval covering of $\mathcal{I}$, that is, there exists  $I_i \in
		\mathcal{I}$ and $t \in I_i$
		such that $t \notin \bigcup_{I \in \mathcal{I}'} I$. By the construction, this
		means that $uv \in E_i$ is not
		temporally covered by any vertex appearances in $\mathcal{S}[W_t]$, which is a contradiction.
	\end{proof}

	Lemma~\ref{cl:IntervalCoveringReduction} shows that finding a minimum-cardinality \SWTVC\ 
	of~$(G_{0},\lambda)$ is equivalent to
	finding a minimum interval covering of $\mathcal{I}$. An easy linear-time greedy algorithm
	for the \textsc{Interval Covering} picks at each iteration, among the intervals that cover the leftmost uncovered point, the one with largest finishing time.
	Algorithm~\ref{alg:1edgeVC} implements this simple rule in the context of the
	\SWTVCProblemShort\ problem.

	\begin{algorithm}[h]
		\caption{\SWTVCProblemShort\ on single-edge temporal graphs}
		\label{alg:1edgeVC}
		\begin{algorithmic}[1]
			\REQUIRE{A temporal graph $(G_{0}, \lambda)$ of lifetime $T$ with $V(G_{0}) = \{ u,v\}$, and $\Delta \leq T$.}
			\ENSURE{A minimum-cardinality \SWTVC\ $\mathcal{S}$ of $(G_{0}, \lambda)$.}

			\STATE{$\mathcal{S} \leftarrow \emptyset$}
			
			\STATE{$t = 1$}
			
			\WHILE{$t \leq T- \Delta +1$}
			\IF{$\exists r \in [t, t + \Delta -1]$ such that $uv \in E_r$}
			\STATE{choose maximum such $r$ and add $(u,r)$ to $\mathcal{S}$}\label{line:choice_of_r}
			\STATE{$t \leftarrow r+1$}
			\ELSE
			\STATE{$t \leftarrow t+1$}
			\ENDIF
			\ENDWHILE
			
			\RETURN{$\mathcal{S}$}
		\end{algorithmic}
	\end{algorithm}

	\begin{lemma}
		Algorithm~\ref{alg:1edgeVC} solves \SWTVCProblemShort\ on a single-edge temporal graph and can be implemented to work in time $O\left( T \right)$. 
	\end{lemma}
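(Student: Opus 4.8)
The plan is to establish correctness and the running-time bound separately, leaning on the reduction to \textsc{Interval Covering} already set up in Lemma~\ref{cl:IntervalCoveringReduction}. For correctness, I would first argue that Algorithm~\ref{alg:1edgeVC} is exactly the classical greedy for \textsc{Interval Covering} transported through that reduction. Concretely, the loop variable $t$ always points at the leftmost window-start $W_t$ that has not yet been handled. Whenever the edge $uv$ is active at some slot inside $[t,t+\Delta-1]$ --- equivalently, whenever $t$ is a point of $\bigcup_{I\in\mathcal{I}}I$ that still needs covering --- the algorithm selects the \emph{latest} such slot $r$; this corresponds precisely to the interval $I_r=[r-\Delta+1,r]$ of largest finishing time among those containing $t$, which is the greedy choice. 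If instead $uv$ is inactive throughout $[t,t+\Delta-1]$, then $E[W_t]$ contains no edge, so window $W_t$ imposes no covering constraint and can be skipped by incrementing $t$.

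Next I would verify that the jump $t \leftarrow r+1$ is sound, i.e.~that the single appearance $(u,r)$ temporally covers every window $W_{t'}$ with $t \le t' \le r$. Since $r \le t+\Delta-1$ we have $r-\Delta+1 \le t$, so each such $t'$ satisfies $r-\Delta+1 \le t' \le r$, meaning slot $r$ lies in $W_{t'}$; as $uv\in E_r$, the appearance $(u,r)$ covers $uv$ in $W_{t'}$. Hence after the jump all windows with start at most $r$ are either covered or edgeless, and the invariant that $t$ is the leftmost unhandled window-start is maintained. Combining this invariant with the optimality of the interval-covering greedy and with Lemma~\ref{cl:IntervalCoveringReduction}, the returned set $\mathcal{S}$ is a minimum-cardinality \SWTVC\ of $(G_0,\lambda)$. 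Note that using $u$ throughout (rather than $v$) is without loss of generality, since both endpoints cover the single edge symmetrically, matching the form of $\mathcal{S}$ in Lemma~\ref{cl:IntervalCoveringReduction}.

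For the running time, the naive reading of line~\ref{line:choice_of_r} scans $[t,t+\Delta-1]$ and costs $O(\Delta)$ per iteration, giving only an $O(T\Delta)$ bound. To reach $O(T)$ I would add an $O(T)$ preprocessing pass computing, for every slot $i$, the value $\mathrm{prevActive}(i)=\max\{j\le i : uv\in E_j\}$ (undefined if no such $j$ exists) in a single left-to-right sweep. Then the test in the \textbf{if} and the choice of $r$ both reduce to reading $\mathrm{prevActive}(t+\Delta-1)$ and comparing it with $t$, which is $O(1)$. Since $t$ is strictly increasing across iterations and never exceeds $T-\Delta+1$, the loop performs $O(T)$ iterations of $O(1)$ work each, for a total of $O(T)$ including preprocessing.

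The main obstacle I anticipate is not the correctness --- which follows cleanly once the algorithm is identified with the known interval-covering greedy --- but rather making the running time genuinely linear: the literal pseudocode hides an $O(\Delta)$ inner search, so the argument must supply and justify the $\mathrm{prevActive}$ preprocessing (or an equivalent amortization) in order to discharge the claimed $O(T)$ bound.
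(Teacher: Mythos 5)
Your proof is correct, and its correctness half follows the same route as the paper: identifying Algorithm~\ref{alg:1edgeVC} with the classical greedy for \textsc{Interval Covering} through Lemma~\ref{cl:IntervalCoveringReduction}, with the same observations about edgeless windows and the soundness of the jump $t \leftarrow r+1$. (In the paper this correctness argument lives in the text surrounding the lemma rather than in the proof itself; the paper's proof addresses only the running time, so your explicit treatment of the invariant and the jump is a welcome addition rather than a deviation.) Where you genuinely differ is in how you discharge the $O(T)$ bound. The paper amortizes the inner search directly: it scans each window right-to-left, marks slots where $uv$ is absent as ``NO'', and never revisits a marked slot in later windows, so every time slot is inspected at most once over the whole execution. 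You instead add an $O(T)$ preprocessing sweep computing $\mathrm{prevActive}(i)=\max\{j\le i : uv\in E_j\}$, after which both the existence test and the choice of the maximum active slot $r$ in $[t,t+\Delta-1]$ reduce to one $O(1)$ lookup of $\mathrm{prevActive}(t+\Delta-1)$ compared against $t$; combined with the fact that $t$ strictly increases, this also gives $O(T)$. Both implementations are valid; yours buys worst-case $O(1)$ work per loop iteration and a cleaner invariant-based analysis, while the paper's marking scheme avoids the auxiliary array and works in a single online pass. You correctly diagnosed that the literal pseudocode hides an $O(\Delta)$ search (the paper's proof exists precisely to remove it), and your handling of the truncated intervals $I_i=[i-\Delta+1,i]\cap[1,T-\Delta+1]$ and of the choice of endpoint $u$ matches Lemma~\ref{cl:IntervalCoveringReduction} exactly.
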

	\begin{proof}
		The time complexity of the algorithm is dominated by the running time of the while-loop. We provide an implementation of the while-loop, which works in time $O(T)$: in each iteration, we inspect the current $\Delta$-window $[t, t+\Delta -1]$ from the rightmost time slot moving to the left. As we go through the time slots, we mark the ones in which edge $uv$ does not appear as ``NO''. When we move to the next iteration (the next $\Delta$-window), we do not need to revisit any time slots that have been marked as ``NO'' and we immediately move to the next iteration whenever we meet such a slot. This way we visit every time slot at most once, and hence we exit the while-loop after $O(T)$ operations.
	\end{proof}

	\paragraph*{Always degree at most $d$ temporal graphs}

	We present now the main algorithm of this section, the idea of which is to independently solve \SWTVCProblemShort\ for every possible single-edge temporal subgraph 
	of a given temporal graph by Algorithm~\ref{alg:1edgeVC}, and take the union of these solutions.
	We will show that this algorithm is a $d$-approximation algorithm for \SWTVCProblemShort\ on always degree at most $d$ temporal graphs.
	
	Let $(G, \lambda)$ be a temporal graph,  where $G=(V,E)$, $|V|=n$, and $|E|=m$. For every edge $e=uv \in E$, let $(G[\{u,v\}], \lambda)$ 
	denote the temporal graph where the underlying graph is the induced subgraph $G[\{u,v\}]$ of $G$ and the labels of $e$ are exactly the same as in $(G, \lambda)$.
	
	\begin{algorithm}[H]
		\caption{$d$-approximation of \SWTVCProblemShort\ on always degree at most $d$ temporal graphs} 
		\label{alg:alwaysDapprox}
		\begin{algorithmic}[1]
			\REQUIRE{An always degree at most $d$ temporal graph $(G, \lambda)$ of lifetime $T$, and $\Delta \leq T$.}
			\ENSURE{A \SWTVC\ $\mathcal{S}$ of $(G, \lambda)$.}
			
			\FOR{$i=1$ to $T$} 
			\STATE{$\mathcal{S}_i \leftarrow \emptyset$}
			\ENDFOR
			
			\FOR{every edge $e=uv \in E(G)$} \label{line:DapproxLine1}
			\STATE{Compute the optimal solution $\mathcal{S}^{e}$ of the problem for
				$(G[\{u,v\}], \lambda)$
				by Algorithm~\ref{alg:1edgeVC}}\label{line:DapproxLine}
			
			\FOR{$i=1$ to $T$} 
			\STATE{$\mathcal{S}_i \leftarrow \mathcal{S}_i \cup \mathcal{S}^{e}_i$}
			\ENDFOR
			
			\ENDFOR\label{line:DapproxLine2}
			
			\RETURN{$\mathcal{S}$}
		\end{algorithmic}
	\end{algorithm}

	\begin{lemma}\label{lem:alwaysDapprox}
		Algorithm~\ref{alg:alwaysDapprox} is a $O\left( m T \right)$-time $d$-approximation algorithm for
		\SWTVCProblemShort\ on always degree at most 
		$d$ temporal graphs.
	\end{lemma}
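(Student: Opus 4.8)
The statement has two parts: a running-time bound of $O(mT)$ and a correctness/approximation guarantee of ratio $d$. I would prove these separately. The running time is the easier of the two: Algorithm~\ref{alg:alwaysDapprox} iterates over all $m$ edges, and for each edge $e=uv$ it invokes Algorithm~\ref{alg:1edgeVC} on the single-edge temporal graph $(G[\{u,v\}],\lambda)$, which by the preceding lemma runs in time $O(T)$. Merging the returned solution $\mathcal{S}'(uv)$ into the accumulating $\mathcal{S}$ costs another $O(T)$ per edge (one pass over the time slots). Hence the total work is $m\cdot O(T)=O(mT)$, giving the claimed bound.

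\textbf{Feasibility.} Next I would argue that the returned $\mathcal{S}=\bigcup_{e}\mathcal{S}'(e)$ is indeed a \SWTVC\ of $(G,\lambda)$. Fix any window $W_t$ and any edge $e=uv\in E[W_t]$. By construction $\mathcal{S}'(uv)$ is an \emph{optimal} (hence feasible) \SWTVC\ of the single-edge temporal graph $(G[\{u,v\}],\lambda)$, in which $uv$ carries exactly the same labels as in $(G,\lambda)$. Feasibility there means that within $W_t$ the edge $uv$ is temporally covered by some appearance $(u,s)$ or $(v,s)$ with $s\in W_t$ and $uv\in E_s$. Since $\mathcal{S}\supseteq\mathcal{S}'(uv)$ retains that appearance at the same time slot, $uv$ is temporally covered in $W_t$ by $\mathcal{S}$ as well. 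As $t$ and $e$ were arbitrary, $\mathcal{S}$ is feasible.

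\textbf{Approximation ratio.} The heart of the proof is the ratio bound, and this is the step I expect to be the main obstacle. Let $\mathcal{O}$ be a minimum-cardinality \SWTVC\ of $(G,\lambda)$. The key observation is that restricting $\mathcal{O}$ to the endpoints of a single edge yields a feasible single-edge solution: for each $e=uv$, the set $\mathcal{O}(uv)=\{(w,s)\in\mathcal{O}:w\in\{u,v\}\}$ is a \SWTVC\ of $(G[\{u,v\}],\lambda)$, so by optimality of Algorithm~\ref{alg:1edgeVC} we have $|\mathcal{S}'(uv)|\le|\mathcal{O}(uv)|$. Summing, $|\mathcal{S}|\le\sum_{e\in E}|\mathcal{S}'(e)|\le\sum_{e\in E}|\mathcal{O}(e)|$. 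The remaining task is to bound $\sum_{e}|\mathcal{O}(e)|$ against $|\mathcal{O}|$: each appearance $(w,s)\in\mathcal{O}$ is counted in $|\mathcal{O}(e)|$ once for every edge $e$ incident to $w$ that is active at slot $s$, and the number of such edges is exactly $\deg_{G_s}(w)\le d$ since $(G,\lambda)$ is always degree at most $d$. Therefore $\sum_{e}|\mathcal{O}(e)|\le d\cdot|\mathcal{O}|$, which gives $|\mathcal{S}|\le d\cdot|\mathcal{O}|$ and completes the argument. The one subtlety to handle carefully is that the union $\bigcup_e\mathcal{S}'(e)$ may double-count a single appearance $(w,s)$ used for several incident edges; this only helps, since $|\mathcal{S}|=|\bigcup_e\mathcal{S}'(e)|\le\sum_e|\mathcal{S}'(e)|$, so the inequality direction we need is preserved.
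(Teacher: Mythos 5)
Your proposal follows essentially the same route as the paper's proof: the paper also compares, for each edge $e$, the per-edge optimum $\mathcal{S}'(e)$ against the restriction of a global optimum $\mathcal{S}^*$ to that edge, and it packages the two resulting inequalities as a double count of the triples $(v,e,t)$ with $v\in\mathcal{S}^*_t$, $e\in E_t$, and $v$ incident to $e$. Your runtime and feasibility arguments match the paper's, and your explicit handling of the union's double-counting, $|\mathcal{S}|\le\sum_e|\mathcal{S}'(e)|$, is in fact slightly more careful than the paper's displayed chain, which writes that step as an equality.

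There is, however, one slip you must repair. You define $\mathcal{O}(uv)=\{(w,s)\in\mathcal{O}:w\in\{u,v\}\}$ with no requirement that $uv\in E_s$, but your counting step then asserts that an appearance $(w,s)$ is counted in $|\mathcal{O}(e)|$ only for edges incident to $w$ that are \emph{active} at slot $s$. Under your stated definition this is false: $(w,s)$ lands in $\mathcal{O}(e)$ for \emph{every} edge $e$ of the underlying graph incident to $w$, so you only get $\sum_e|\mathcal{O}(e)|\le \max_w \deg_G(w)\cdot|\mathcal{O}|$, and $\deg_G$ is not bounded by $d$. Concretely, take $G$ a star whose $n-1$ edges are active at pairwise distinct time slots: every snapshot is a matching, so $d=1$, yet each appearance of the centre in $\mathcal{O}$ is counted $n-1$ times in your sum. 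The fix is exactly the condition your counting already presumes: define $\mathcal{O}(e)=\{(w,s)\in\mathcal{O}:w\in\{u,v\},\ e\in E_s\}$. This pruned set is still a feasible solution for $(G[\{u,v\}],\lambda)$, because an appearance at a slot where $e$ is inactive can never temporally cover $e$; hence $|\mathcal{S}'(e)|\le|\mathcal{O}(e)|$ survives, and now each $(w,s)\in\mathcal{O}$ is counted exactly $\deg_{G_s}(w)\le d$ times, giving $\sum_e|\mathcal{O}(e)|\le d\,|\mathcal{O}|$ as needed. With this one-line correction your argument is complete and coincides with the paper's.
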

	\begin{proof}
		Let $(G, \lambda)$ be an always degree at most $d$ temporal graph of
		lifetime $T$, and
		let $\mathcal{S}^*$ be a minimum-cardinality \SWTVC\ of $(G, \lambda)$.
		We will show that $|\mathcal{S}| \leq d \cdot |\mathcal{S}^*|$, where $\mathcal{S}$ is the solution 
		computed by Algorithm~\ref{alg:alwaysDapprox}.
		To this end we apply a double counting argument to the set~$C$ of all triples
		$(v,e,t) \in V \times E \times [1,T]$
		such that $v \in \mathcal{S}^*_t$, $e \in E_t$, and $v$ is incident to $e$.
		
		On the one hand 
		$$
		|C| = \sum\limits_{t=1}^{T} \sum\limits_{v \in \mathcal{S}^*_t} | \{
		(v,e,t) : e \in E_t \text{ and $v$ is incident to $e$} \} |
		\leq \sum\limits_{t=1}^{T} \sum\limits_{v \in \mathcal{S}^*_t} d = d \cdot
		|\mathcal{S}^*|,
		$$
		where the inequality follows from the assumption that every snapshot of $(G, \lambda)$ has
		maximum degree at most $d$.
		
		On the other hand
		$$
		|C| = \sum\limits_{e \in E} \sum\limits_{t =1}^{T} |\{ (v,e,t) :  e \in E_t, v
		\in \mathcal{S}^*_t,
		\text{ and $v$ is incident to $e$} \}|
		\geq \sum\limits_{e \in E}
		|\mathcal{S}^{e}| = |\mathcal{S}|,
		$$
		where $\mathcal{S}^{e}$ is the optimum \SWTVC\ of the temporal graph $(G[\{u,v\}], \lambda)$ 
		(see line~\ref{line:DapproxLine} of Algorithm~\ref{alg:alwaysDapprox}). 
		The last inequality follows from the fact that the restriction of any
		\SWTVC\ of $(G, \lambda)$ to the temporal subgraph induced by the endpoints of $e$ is a \SWTVC\
		of the temporal subgraph, and therefore
		has cardinality at least~$|\mathcal{S}^{e}|$. 		
		We conclude that $|\mathcal{S}| \leq |C| \leq d \cdot |\mathcal{S}^*|$, as required. 
		
		The time-complexity of Algorithm~\ref{alg:alwaysDapprox} is dominated by the time needed to execute the for-loop of lines~\ref{line:DapproxLine1}-\ref{line:DapproxLine2}. The latter requires time $O\left(mT\right)$.	
	\end{proof}

	Note that in the case of always matching temporal graphs, i.e.~where every snapshot is a matching, the maximum degree in each snapshot is $d=1$, so the above $d$-approximation actually yields an exact algorithm (see Corollary~\ref{cor:always_mathcing}).
	This is not surprising, since a single vertex appearance can only cover one edge in always matching temporal graphs. Therefore, the solutions for different edges can be independently optimized.
	\begin{corollary}\label{cor:always_mathcing}
		\SWTVCProblemShort\ can be optimally solved in $O(mT)$
		time on the class of always matching temporal graphs.
	\end{corollary}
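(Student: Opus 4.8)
The plan is to obtain this result as an immediate specialisation of Lemma~\ref{lem:alwaysDapprox} to the parameter value $d=1$. First I would observe that the class of always matching temporal graphs coincides with the class of always degree at most $1$ temporal graphs: in every snapshot of an always matching temporal graph the set of active edges forms a matching, so each vertex is incident to at most one active edge, and hence the maximum degree of each snapshot is at most $d=1$. This is exactly the hypothesis required to run Algorithm~\ref{alg:alwaysDapprox}.

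Next I would apply Algorithm~\ref{alg:alwaysDapprox} to the given input with $d=1$. By Lemma~\ref{lem:alwaysDapprox}, this algorithm runs in $O(mT)$ time and returns a \SWTVC\ $\mathcal{S}$ whose cardinality satisfies $|\mathcal{S}| \leq d\cdot|\mathcal{S}^*| = |\mathcal{S}^*|$, where $\mathcal{S}^*$ is a minimum-cardinality \SWTVC\ of the input. Since $\mathcal{S}^*$ has smallest cardinality among all feasible solutions, we trivially also have $|\mathcal{S}| \geq |\mathcal{S}^*|$, and therefore $|\mathcal{S}| = |\mathcal{S}^*|$, i.e.~the returned solution is optimal.

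There is essentially no obstacle in this argument, as it is a direct corollary: the only point worth stating explicitly is that a $d$-approximation with $d=1$ is, by definition, an exact algorithm, so the approximation guarantee of Lemma~\ref{lem:alwaysDapprox} collapses to optimality in this regime. The running time $O(mT)$ is inherited verbatim from the lemma, which completes the proof.
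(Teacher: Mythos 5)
Your proposal is correct and matches the paper's own argument exactly: the paper also derives this corollary by noting that always matching temporal graphs are precisely always degree at most $1$ temporal graphs, so the $d$-approximation of Lemma~\ref{lem:alwaysDapprox} with $d=1$ becomes an exact $O(mT)$-time algorithm. Your explicit remark that $|\mathcal{S}| \leq |\mathcal{S}^*|$ combined with minimality of $\mathcal{S}^*$ forces equality is a fine (if routine) spelling-out of what the paper leaves implicit.
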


	\section{Conclusions and open problems}
	\label{sec-conclusions}

	In this paper we introduced and studied two natural temporal extensions of the problem \textsc{Vertex Cover} 
	for static graphs, namely \textsc{Temporal Vertex Cover} and \SWTVCProblem\ (for short, \textsc{TVC} and \SWTVCProblemShort, respectively), which take into account the dynamic nature of temporal networks.
	We presented a thorough investigation of the complexity and approximability of these problems, including
	strong hardness results, and various approximation and exact algorithms.
	In particular, for \SWTVCProblemShort\ we designed a linear-time optimal algorithm on always degree at most 1 temporal graphs, i.e.~on temporal graphs that consist of a matching in each time step. 
On the other hand, we showed that \SWTVCProblemShort\ becomes NP-hard on always degree at most 3 temporal
	graphs, even when the underlying graph is cubic and every snapshot has connected components with 
at most~7 vertices. This leaves an intriguing open question of the complexity status of the problem on always degree at most~2
	temporal graphs, i.e.~on temporal graphs that consist of disjoint paths and cycles in each time step.
	
	\begin{problem}
	Restricted on always degree at most 2 temporal graphs, is \SWTVCProblemShort\ efficiently solvable?
	\end{problem}
	
	\noindent
	For \SWTVCProblemShort\ we provided various polynomial-time approximation algorithms, including a 
	$2\Delta$-approximation algorithm for general temporal graphs, and a $d$-approximation algorithm for
	always degree at most $d$ temporal graphs. There are no known matching lower bounds for these approximation factors, and it would be interesting to know whether there is any room for improvement.
	
	\begin{problem}
		Let $d$ be the maximum vertex degree in each snapshot of a temporal graph. 
		Can $\SWTVCProblemShort$ be efficiently approximated within a factor better than $2\Delta$ or better than $d$? 
		In particular, does there exist a polynomial-time approximation algorithm for $\SWTVCProblemShort$ with approximation factor $o(\Delta)$ or $o(d)$?
	\end{problem}

\noindent
A natural extension of the problem \textsc{SW-TVC} is computing a sliding $\Delta$-window temporal vertex cover 
that minimizes the maximum cost (i.e.~the maximum number of vertex appearances) over all time windows of a given length $\ell\geq \Delta$. Formally this problem can be defined as follows.

\vspace{0,3cm} \noindent \fbox{ 
\begin{minipage}{0.96\textwidth}
 \begin{tabular*}{\textwidth}{@{\extracolsep{\fill}}lr} \textsc{Restricted-Length Sliding Window Temporal Vertex Cover} \ \ (\textsc{RL-SW-TVC}) & \\ \end{tabular*}
 
  \vspace{1.2mm}
{\bf{Input:}}  A temporal graph $(G,\lambda)$ with lifetime $T$, and two integers $\Delta \leq \ell \leq T$.\\
{\bf{Output:}} A sliding $\Delta$-window temporal vertex cover $\mathcal{S}$ of $(G,\lambda)$ 
such that the number of vertex appearances of $\mathcal{S}$ in any time window of length $\ell$ is minimized.
\end{minipage}} \vspace{0,3cm}

Clearly, \textsc{SW-TVC} is the special case of \textsc{RL-SW-TVC} where $\ell=T$, in which case we aim at minimizing the total number of vertex appearances in the solution. 
Although the two problems might look superficially similar to each other, they are different, 
as we illustrate in the example of Figure~\ref{RL-SW-TVC-example-fig}, where $T=3$ and $\ell=\Delta=2$. 
Figure~\ref{RL-SW-TVC-example-fig-A} an optimal solution to \textsc{SW-TVC} is highlighted, which 
contains in total 4 vertex appearances (all at time $t=2$). In this solution, the maximum number of vertex appearances in every time window of length $\ell=2$ is 4. 
On the other hand, Figure~\ref{RL-SW-TVC-example-fig-B} an optimal solution to \textsc{RL-SW-TVC} 
for $\ell=2$ is highlighted, which contains in total 5 vertex appearances. 
In this solution, the maximum number of vertex appearances in every time window of length $\ell=2$ is 3.

	\begin{problem}
	Is the problem \textsc{RL-SW-TVC} strictly harder than \textsc{SW-TVC}? Can our results be extended to \textsc{RL-SW-TVC}?
	\end{problem}

\begin{figure}[h]
\centering%
\subfigure[]{ \label{RL-SW-TVC-example-fig-A}
\includegraphics[width=0.48\textwidth]{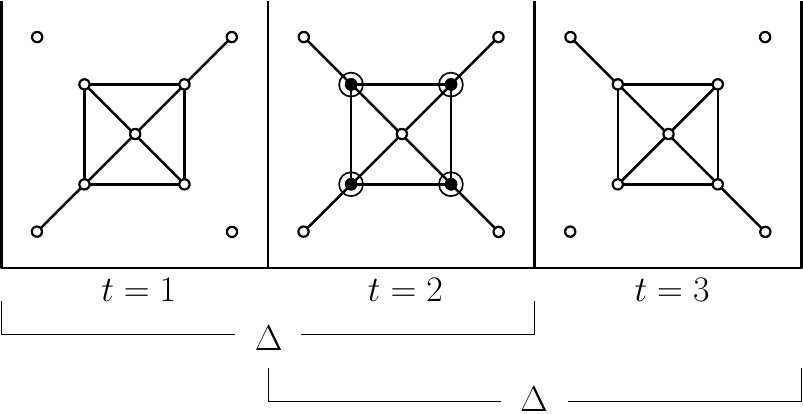}} \hspace{0,195cm} 
\subfigure[]{ \label{RL-SW-TVC-example-fig-B}
\includegraphics[width=0.48\textwidth]{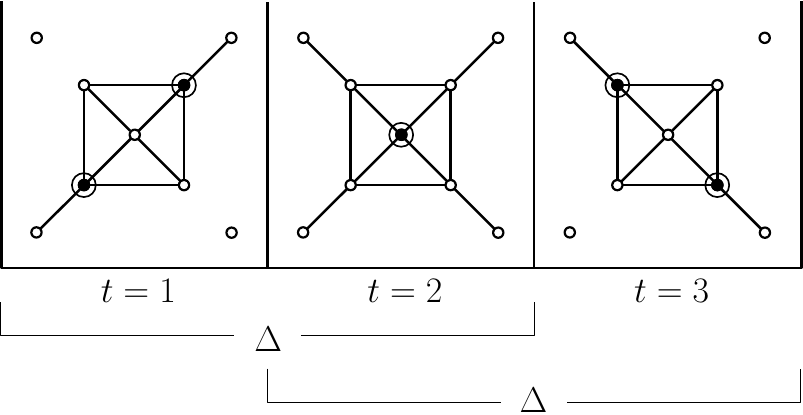}}
\caption{A temporal graph $(G,\lambda)$ with lifetime $T=3$, where $\Delta=2$. 
(a) An optimal solution to \textsc{SW-TVC} and (b)~an optimal solution to \textsc{RL-SW-TVC} where $\ell=\Delta=2$.}
\label{RL-SW-TVC-example-fig}
\end{figure}

\noindent
Finally, it will be interesting to investigate the practical aspects of the studied problems, 
\textsc{TVC} and \SWTVCProblemShort, and in particular to practically evaluate the performance of the different presented algorithms on real-world instances.

	\begin{problem}
	How do the different approximation and exact algorithms perform on real-world temporal graph instances?
	\end{problem}

\end{document}